\newcommand{\newoddside}{
        \ifthenelse{\iseven{\thepage}}{\newpage}{
        \newpage
        \textcolor{white}{placeholder} 
        \thispagestyle{empty}
        \newpage
        }
}
\newcommand{\newevenside}{
        \ifthenelse{\isodd{\thepage}}{\newpage}{
        \newpage
        \textcolor{white}{placeholder} 
        \thispagestyle{empty}
        \newpage
        }
}
\numberwithin{equation}{section}
\newtheorem{thm}{Theorem}
\newtheorem{lem}{Lemma}
\newtheorem{rem}{Remark}
\newtheorem{defi}{Definition}
\newtheorem{thm2}{Theorem}[section]
\newtheorem{lem2}{Lemma}[section]
\newtheorem{rem2}{Remark}[section]
\newtheorem{prop2}{Proposition}[section]
\newtheorem{cor2}{Corollary}[section]
\newtheorem{defi2}{Definition}[section]
\newcommand{\Var}{\operatorname{Var}}
\newcommand{\Cov}{\operatorname{Cov}}
\newcommand{\E}{\operatorname{E}}
\newcommand{\Bias}{\operatorname{Bias}}
\newcommand{\MISE}{\operatorname{MISE}}
\newcommand{\Dir}{\operatorname{Dir}}
\newcommand{\MA}{\operatorname{MA}}
\newcommand{\Tr}{\operatorname{tr}}
\newcommand{\Cum}{\operatorname{Cum}}
\newcommand{\TrSq}{\operatorname{TrSq}}
\def\liminf{\mathop{\underline{\rm lim}}}
\begin{document}

\title{Nonparametric Estimation of the Volatility Function in a High-Frequency Model corrupted by Noise}

\author{Axel Munk$^*$ and Johannes Schmidt-Hieber \\
\vspace{0.1cm} \\
{\em Institut f\"ur Mathematische Stochastik, Universit\"at G\"ottingen,} \\ {\em Goldschmidtstr. 7, 37077 G\"ottingen} \\ {\small {\em Email:} \texttt{munk@math.uni-goettingen.de}, \texttt{schmidth@math.uni-goettingen.de}}}


\date{}
\maketitle

\begin{abstract}
We consider the models $Y_{i,n}=\int_0^{i/n} \sigma(s)dW_s+\tau(i/n)\epsilon_{i,n}$, and $\tilde  Y_{i,n}=\sigma(i/n)W_{i/n}+\tau(i/n)\epsilon_{i,n}$, \ $i=1,\ldots,n$, where $\left(W_t\right)_{t\in \left[0,1\right]}$ denotes a standard Brownian motion and $\epsilon_{i,n}$ are centered i.i.d. random variables with $\E\left(\epsilon_{i,n}^2\right)=1$ and finite fourth moment. Furthermore, $\sigma$ and $\tau$ are unknown deterministic functions and $\left(W_t\right)_{t\in \left[0,1\right]}$ and $\left(\epsilon_{1,n},\ldots,\epsilon_{n,n}\right)$ are assumed to be independent processes. Based on a spectral decomposition of the covariance structures we derive series estimators for $\sigma^2$ and $\tau^2$ and investigate their rate of convergence of the $\MISE$ in dependence of their smoothness. To this end specific basis functions and their corresponding Sobolev ellipsoids are introduced and we show that our estimators are optimal in minimax sense. Our work is motivated by microstructure noise models. A major finding is that the microstructure noise $\epsilon_{i,n}$ introduces an additionally degree of ill-posedness of $1/2$; irrespectively of the tail behavior of $\epsilon_{i,n}$. The performance of the estimates is illustrated by a small numerical study.
\end{abstract}

\medskip

\noindent\textbf{AMS 2000 Subject Classification:}
Primary 62M09, 62M10; secondary 62G08, 62G20.

\noindent\textbf{Keywords:\/} Brownian motion; Variance estimation; Minimax rate; Microstructure noise; Sobolev Embedding.

\section{Introduction}
Consider the models
\begin{eqnarray}
	Y_{i,n}= \int_0^{i/n}\sigma\left(s\right)dW_s
	+\tau\left(\frac in\right) \epsilon_{i,n} 
	\quad i=1,\ldots,n,
	\label{eq.mod2}
\end{eqnarray}
and
\begin{eqnarray}
	\tilde Y_{i,n}= \sigma\left(\frac in\right)W_{i/n}
	+\tau\left(\frac in\right) \epsilon_{i,n} 
	\quad i=1,\ldots,n
	\label{eq.mod}
\end{eqnarray}
respectively, where $\left(W_t\right)_{t\in \left[0,1\right]}$ denotes a Brownian motion and $\epsilon_{i,n}$ is so called microstructure noise, i.e. we assume $\epsilon_{i,n}$ i.i.d., $\E\left(\epsilon_{i,n}^2\right)=1$ and $\E\left(\epsilon_{i,n}^4\right)<\infty$. $\left(W_t\right)_{t\in \left[0,1\right]}$ and $\left(\epsilon_{1,n},\ldots,\epsilon_{n,n}\right)$ are assumed to be independent, and $\sigma$ and $\tau$ are unknown, positive and deterministic functions. 

Our models (\ref{eq.mod2}) and (\ref{eq.mod}) are natural extensions of the situation when $\sigma$ and $\tau$ are constant, which has been, in a slightly broader setting, previously considered by \cite{cai}, \cite{glo1}, \cite{glo2} and \cite{ste} among others. In the latter papers sharp minimax estimators were derived for $\sigma^2$ and $\tau^2$. The minimax rate for $\sigma^2$ is $n^{-1/4}$ and for $\tau^2$ it is $n^{-1/2}$, and the corresponding constants for quadratic loss (MSE) being $8 \tau \sigma^3$ and $2 \tau^4$, respectively. To estimate $\sigma$ and $\tau,$ maximum likelihood is feasible (see \cite{ste}) and achieves these bounds.
Other efficient estimators where given by \cite{cai}, \cite{glo1} or \cite{glo2}.
In our case, i.e. when $\sigma$ and $\tau$ are functions these methods fail
and techniques from nonparametric regression become necessary. We will postpone a more careful dicussion of models (\ref{eq.mod2}) and (\ref{eq.mod}) to Section \ref{sec.dmn}.

Both models incorporate, as usually in high-frequency financial models, an additional noise term, denoted 
as microstructure noise (cf. \cite{ban} and \cite{mad} ) in order to model market frictions such as bid-ask spreads and rounding errors. In general, microstructure noise is often assumed as white noise process with bounded fourth moment. Therefore, we may interpret both models as obtaining data from transformed Brownian motions under additional measurement errors. Particularly, our assumptions cover the important case when $\epsilon_{i,n}\stackrel{i.i.d.}{\sim}\mathcal{N}\left(0,1\right).$ 

In this paper we try to understand how estimation of the functions $\sigma^2$ and $\tau^2$ in (\ref{eq.mod2}) and (\ref{eq.mod}) itself can be performed, i.e. the time derivative of the integrated volatility. To our knowledge, this issue has never been addressed before, a remarkable exception is \cite{baru} where a harmonic  
analysis technique is introduced in order to recover $\sigma^2$ from noiseless data. A naive estimator of $\sigma^2$ would be the derivative of an estimator of $\int_0^s \sigma^2(x)dx$ with respect to $s$. However, (numerical) differentiation of  $\int_0^s\sigma^2(x)dx$ with respect to $s$  yields an additional degree of ill-posedness and there are to the best of our knowledge no estimates and no theoretical results available how to estimate $\sigma^2$ in our situation. Instead, we propose a regularized estimator for $\sigma$ and $\tau$ that attains the minimax rate of convergence. Our estimator is a Fourier series estimator where we will estimate the single cosine Fourier coefficients, $\int_0^1 \sigma^2(x)\cos(k\pi x)dx$, $k=0,1,\ldots$ by a particular spectral estimator which is specifically tailor suited to this problem. The difficulty to estimate $\sigma^2$  can be explained generically from the point of view of statistical inverse problem: Microstructure noise induces an additional degree of ill posedness -similar as in a deconvolution problem- which in our case leads to a reduction of the rate of convergence by a factor $1/2$. Surprisingly, and in contrast to deconvolution, this is only reflected in the behavior of the eigenvalues of the covariance operator of the process in (\ref{eq.mod2}) and (\ref{eq.mod}) and not in the tail behavior of the Fourier transform of the error $\epsilon_{i,n}$.

We stress again that we are aware of the fact that our model assumes a deterministic function $\sigma$ and $\tau$, which only depends on time $t$ and generalization to $\sigma\left(t,X_t\right)$ is not obvious and a challenge for further research. However, the purely deterministic case already helps us to reveal the daily pattern of the volatility and finally we believe that our analysis is an important step into the understanding of these models from the view point of a statistical inverse problems.

\noindent
{\em Results:} All results are obtained with respect to $\MISE$-risk. Let $\alpha$ and $\beta$ denote a certain smoothness of $\sigma^2$ and $\tau^2$, respectively. Roughly speaking, these numbers correspond to the usual Sobolev indices, although in our situation, a particular choice of basis is required, leading us to the definition of Sobolev {\it s}-ellipsoids (see Definition \ref{def.ssobo}). Then we show that $\tau^2$ can be estimated at rate $n^{-\beta/\left(2\beta+1\right)}$ for $\beta>1, \alpha>1/2$ in model (\ref{eq.mod2}) and $\beta>1, \alpha>3/4$ in model (\ref{eq.mod}). This corresponds to the classical minimax rates for the usual Sobolev ellipsoids without the Brownian motion term in (\ref{eq.mod2}) and (\ref{eq.mod}). More interesting,  we obtain for estimation of $\sigma^2$ the $n^{-\alpha/\left(4\alpha+2\right)}$ rate of convergence for $\alpha>3/4, \beta>5/4$ in model (\ref{eq.mod2}) and $\alpha>3/2, \beta>5/4$ in model (\ref{eq.mod}). 
We will show that these rates are uniform for Sobolev {\it s}-ellipsoids. Lower bounds with respect to H\"older classes for estimation of $\sigma^2$ have been obtained in \cite{mun}. Here we will extend this result to Sobolev {\it s}-ellipsoids. It follows that the obtained rates are minimax, indeed.

To summarize, our major finding is that in contrast to ordinary deconvolution the difficulty of estimation $\sigma^2$ when corrupted by additional (microstructure) noise $\epsilon$, is generically increased by a factor of $1/2$ within the {\it s}-ellipsoids. This is quite surprising because one might have expected that for instance Gaussian error leads to logarithmic convergence rates due to its exponential decay of the Fourier transform (see e.g. \cite{BHMR07}, \cite{BT07a}, \cite{BT07b} and \cite{Fan91} for some results in this direction). We stress that for our method a minimal smoothness of $\sigma$ in (\ref{eq.mod2}) of $\alpha>1/2$ and in (\ref{eq.mod}) of $\alpha>3/2$ is required. Although convergence rates are half compared with usual nonparametric regression, it turns out that for large sample sizes we get reasonable estimates for smooth functions $\sigma^2$. Roughly speaking, the results imply that $n$ data points for estimation of $\sigma^2$ can be compared to the situation, when we have $\sqrt{n}$ observation in usual nonparamteric regression.

The work is organized as follows. In Sections \ref{sec.dmn} and \ref{sec.techpre} we will discuss models (\ref{eq.mod2}) and (\ref{eq.mod}) in more detail, introduce notation and define the required smoothness classes, Sobolev {\it s}-ellipsoids (details can be found in Appendix \ref{sec.ass}). Section \ref{sec.esttau} and Section \ref{sec.estsigma} are devoted to estimate $\sigma^2$ and $\tau^2$, respectively, and to present the rates of convergence of the estimators (for a proof see Appendix \ref{sec.cvr}). Section \ref{sec.minimax} provides the minimax result. In Section \ref{sec.sim} we briefly discuss some numerical results and illustrate the robustness of the estimator against non-normality and violations of the required smoothness assumptions for $\sigma^2$ and $\tau^2$. Some further results and technicalities of Sections \ref{sec.esttau} and \ref{sec.estsigma} are given in the supplementary material.

\section{Discussion of Models (\ref{eq.mod2}) and (\ref{eq.mod})}
\label{sec.dmn}

In this subsection we briefly discuss the background from financial economics of model (\ref{eq.mod2}) and explore the differences between models (\ref{eq.mod2}) and (\ref{eq.mod}). We may consider the processes $\left(\sigma\left(t\right)W_t\right)_{t\in \left[0,1\right]}$ and $\left(\int_0^t\sigma\left(s\right)dW_s\right)_{t\in \left[0,1\right]}\stackrel{\mathcal{D}}{=}\left(W\left(H\left(t\right)\right)\right)_{t\in \left[0,1\right]}$, $H\left(t\right):=\int_0^t\sigma^2\left(s\right)ds$ as (inhomogeneously) scaled Brownian motions, where scaling takes place in space and in time, respectively. Hence we will refer to $\left(\sigma\left(t\right)W_t\right)_{t\in \left[0,1\right]}$ and $\left(\int_0^t\sigma\left(s\right)dW_s\right)_{t\in \left[0,1\right]}$ in the future as space-transformed (sBM) and time-transformed (tBM) Brownian motion.

\bigskip

{\bf Model (\ref{eq.mod2}):} In the financial econometrics literature variations of model (\ref{eq.mod2}) are often denoted as high-frequency models, since $\left(W_t\right)_{t\in \left[0,1\right]}$ is sampled on time points $t=i/n$ and nowadays there is a vast amount of literature on volatility estimation in high-frequency models with additional microstructure noise term (see \cite{barn}, \cite{jac}, \cite{zha2} and \cite{zha1}).  These kinds of models have attained a lot of attention recently, since the usual quadratic variation techniques for estimation of $\int_0^1\sigma^2(x)dx$ lead to inconsistent estimators (cf. \cite{zha2}).

We are aware of the fact, that in contrast to our model,  volatility is modelled generally not only as time dependent but also depending on the process itself, i.e. $Y_{i,n}=X_{i/n}+\tau\left(i/n\right)\epsilon_{i,n}$, $i=1,\ldots,n,$ $dX_t=\sigma\left(t,X_t\right)dW_t.$ An overview over commonly used parametric forms of $\sigma\left(t,X_t\right)$ and a non-parametric treatment in the absence of microstructure noise, can be found in \cite{fan}.
It is known that the same rates as for the case $\sigma$ and $\tau$ constant hold true if we consider the model (\ref{eq.mod2}) and estimate the so called integrated volatility or realized volatility $\int_0^s\sigma^2(x)dx$  ($s\in[0,1]$) and $\int_0^s\tau^2(x)dx$ instead of $\sigma^2$ and $\tau^2$, 
respectively (see \cite{pod} and \cite{ros} for a discussion on estimation of integrated volatility and related quantities). Recently, model (\ref{eq.mod2}) has been proven to be asymptotically equivalent to a Gaussian shift experiment (see \cite{rei2}). $\sigma^2$ as a function of time corresponds in model (\ref{eq.mod2}) to the instantaneous volatility or spot volatility.

\bigskip

{\bf Model (\ref{eq.mod}):} Model (\ref{eq.mod}) can be regarded as a nonparametric extension of the model with constant $\sigma, \tau$ as discussed for variogram estimation by \cite{ste}. To motivate the usefulness of sBM we give the following Lemma.

\begin{lem}
\begin{itemize}
\item[(i)] Assume that $\sigma$, $0<c\leq\sigma,$ is continuously differentiable. Then the corresponding sBM, $\left(\sigma\left(t\right)W_t\right)_{t\in \left[0,1\right]}$ is the unique solution of the SDE
\begin{align*}
	dX_t= X_t \ d\left(\log\left(\sigma\left(t\right)\right)\right)
	+\sigma\left(t\right)dW_t, \quad X_0=0, \ 0\leq t\leq T.
\end{align*}
\item[(ii)] The variogram of sBM is given by
\begin{align*}
	\gamma\left(s,t\right)&:=
	\E\left(X_t-X_s\right)^2 \\
	&=
	\left(\sigma\left(t\right)t^{1/2}-\sigma\left(s\right)s^{1/2}\right)^2
	+\sigma\left(t\right)\sigma\left(s\right)
	\left[
	\left|s-t\right|-\left(s^{1/2}-t^{1/2}\right)^2
	\right].
\end{align*}
\end{itemize}
\end{lem}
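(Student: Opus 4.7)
My plan for part (i) is to apply Itô's formula directly to the product $X_t := \sigma(t) W_t$. Because $\sigma$ is deterministic and of class $C^1$, no quadratic covariation term appears, and Itô (equivalently, the product rule) yields $dX_t = \sigma'(t) W_t \, dt + \sigma(t) \, dW_t$. The key algebraic move is then to replace $W_t$ by $X_t/\sigma(t)$, which is legitimate since $\sigma \geq c > 0$; this converts the drift into $(\sigma'(t)/\sigma(t)) X_t \, dt = X_t \, d(\log \sigma(t))$, producing the stated SDE. For uniqueness I would invoke the standard existence-and-uniqueness theorem for linear SDEs: the drift $x \mapsto (\log\sigma)'(t)\, x$ is Lipschitz in $x$ with constant $\sup_{t\in[0,1]}|(\log\sigma)'(t)|$, which is finite since $\sigma \in C^1[0,1]$ and $\sigma \geq c$, while the diffusion coefficient $\sigma(t)$ is bounded; hence there exists a unique strong solution with $X_0 = 0$, and the explicit process $\sigma(t)W_t$ constructed above is that solution.

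Part (ii) is a routine computation. First I compute the left-hand side using bilinearity and $\E(W_s W_t)=\min(s,t)$:
\begin{align*}
\E(X_t - X_s)^2
= \sigma(t)^2 t + \sigma(s)^2 s - 2\sigma(t)\sigma(s)\min(s,t).
\end{align*}
Then I expand the proposed right-hand side. The square contributes $\sigma(t)^2 t + \sigma(s)^2 s - 2\sigma(t)\sigma(s)(st)^{1/2}$, while the bracket expands to $|s-t| - s - t + 2(st)^{1/2}$. The $\pm 2\sigma(t)\sigma(s)(st)^{1/2}$ contributions cancel, and the elementary identity $|s-t| - s - t = -2\min(s,t)$ collapses the remainder to $-2\sigma(t)\sigma(s)\min(s,t)$. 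Both sides therefore agree.

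I do not anticipate any real obstacle: the statement is essentially a bookkeeping exercise. The only items that deserve care are (a) the hypotheses in (i), namely continuous differentiability and the strict positivity $\sigma \geq c > 0$, which are precisely what make $(\log\sigma)'$ bounded and the linear SDE well-posed and thereby guarantee uniqueness, and (b) the elementary identity $|s-t| - s - t = -2\min(s,t)$ that drives the cancellation of the $(st)^{1/2}$ cross-terms in part (ii).
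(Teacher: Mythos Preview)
Your proof is correct and follows essentially the same approach as the paper, which simply states that verifying $\sigma(t)W_t$ solves the SDE is easy, invokes a standard SDE uniqueness theorem (Theorem 9.1 in Steele) for uniqueness, and notes that (ii) follows by straightforward calculations. You have supplied precisely the details the paper omits.
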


\begin{proof}
{\it (i)} It is easy to check that sBM indeed is a solution. To establish uniqueness, we apply Theorem 9.1 in \cite{steele}.
{\it (ii)} This follows by straightforward calculations.
\end{proof}

\bigskip

{\bf Comparison of the models:} We remark that tBM  can be related to sBM by partial integration $\int_0^t \sigma\left(s\right)dW_s = \sigma\left(t\right)W_t-\int_0^t \sigma'\left(s\right)W_s ds.$ To see the differences we compared in Figure \ref{fig:pdiff} sBM and tBM in two typical situations: The case where $\sigma\left(t\right)=0$ for $t>T$ and the case, where $\sigma$ is non-continuous. If $\sigma\left(t\right)=0$ for $t>T,$ sBM tends to zero, whereas tBM tends to a constant, i.e. the random variable $\int_0^{T}\sigma\left(s\right)dW_s$. Furthermore, if $\sigma$ is a jump function, sBM has a jump too, whereas tBM does not.

Unlike Model (\ref{eq.mod2}), which can be viewed as a price process, Model (\ref{eq.mod}) has no direct application in financial mathematics. However, from the view point of nonparametric statistics it seems to be a natural extension of the situation when $\sigma$ and $\tau$ are constant.

\section{Introduction to Sobolev s-Ellipsoids and Technical Preliminaries}
\label{sec.techpre}
In this section we shortly introduce the setup needed in order to define the estimators. First we define suitable smoothness classes, which are different, but related to well known Sobolev ellipsoids (see Definition \ref{def.sobell}).
\begin{defi}
\label{def.ssobo}
For $\alpha>0,$ $C>0$, we call the function space
\begin{align*}
	&\Theta_s:=\Theta_s(\alpha,C)
	:= \\
	& \ \left\{f \in L^2[0,1]: \exists \left(\theta_n\right)_{n\in \mathbb{N}}, \ \text{s. t.} \ f(x)=\theta_0+2\sum_{i=1}^\infty \theta_i\cos\left(i\pi x\right),\sum_{i=1}^\infty i^{2\alpha}\theta_i^2 \leq C\right\}
\end{align*}
a Sobolev {\it s}-ellipsoid. If there is a $C<\infty$ such that $f\in \Theta_s\left(\alpha,C\right)$, we say $f$ has smoothness $\alpha$. For $0<l< u <\infty$, we further introduce the uniformly bounded Sobolev {\it s}-ellipsoid
\begin{align*}
	\Theta^b_s(\alpha,C):=\Theta^b_s(\alpha,C, [l, u])
	:= \left\{f \in \Theta_s(\alpha,C): l\leq f\leq u \right\}.
\end{align*}
\end{defi}
Here the ``{\it s}'' refers to ``{\it symmetry}'' since the $L^2[0,1]$ basis
\begin{eqnarray}
	\left\{\psi_k, k=0,\ldots \right\}:=\left\{1, \sqrt{2}\cos\left(k\pi t\right), k=1,\ldots\right\},
	\label{eq.cosbas}
\end{eqnarray}
can also be viewed as a basis of the symmetric $L^2[-1,1]$ functions $$\left\{f : f\in L^2\left[-1,1\right], f(x)=f(-x) \ \forall x\in [0,1]\right\}.$$ Usually, Sobolev ellipsoids are introduced with respect to the Fourier basis $$\left\{1, \sqrt{2}\sin\left(2k\pi t\right), \sqrt{2} \cos\left(2k\pi t\right), k=1,\ldots\right\}$$ on $L_2\left[0,1\right]$ (see Definition (\ref{def.sobell})). As will turn out later on, Sobolev {\it s}-ellipsoids are more natural for our approach. If a function has a certain smoothness in one space, it might have a completely different smoothness with respect to the other basis. For instance the function $\cos\left(\left(2l+1\right)\pi x\right)$, $l\in \mathbb{N}$ has smoothness $\alpha$ for all $\alpha<\infty$ with respect to basis (\ref{eq.cosbas}), and as can be seen by direct calculations only smoothness $\alpha<1/2$ for the Fourier basis. A more precise discussion can be found in Part \ref{sec.ass} of the Appendix.

Instead of (\ref{eq.cosbas}) it is convenient to introduce the functions $f_k:\left[0,1\right]\rightarrow \mathbb{R}$, $k \in \mathbb{N}$
\begin{eqnarray*}
	f_k(x):= \psi_k\left(\frac x2\right).
\end{eqnarray*}
Note that for $k\geq 1$, $f_k^2$ can be expanded in basis (\ref{eq.cosbas}) by  $f_k^2=\psi_0+2^{-1/2}\psi_k.$ For any function $g$ we introduce the forward difference operator $\Delta_i g := g\left(\left(i+1\right)/n\right)-g\left(i/n\right)$ and further the transformed variables $\Delta Y^{k,1}_{i,n}:=\left(Y_{i+1,n}- Y_{i,n}\right)f_k\left(i/n\right)$ and $\Delta Y^{k,2}_{i,n}:=\left(\tilde Y_{i+1,n}- \tilde Y_{i,n}\right)f_k\left(i/n\right)$, $i=1,\ldots,n-1$ for models (\ref{eq.mod2}) and (\ref{eq.mod}), respectively. In order to discuss the models simultaneously, we will write $\Delta Y^{k}_{i,n}=\Delta Y^{k,l}_{i,n}$, $l=1,2.$ Throughout the paper we abbreviate first order differences of observations by
\begin{eqnarray*}
	\Delta Y^k
	:=
	\left(
	\Delta Y^k_{1,n},\ldots,\Delta Y^k_{n-1,n}\right)^t
	.
\end{eqnarray*}

We write $\mathbb{M}_{p,q}$, $\mathbb{M}_p$ and $\mathbb{D}_p$ for the space of $p \times q$ matrices, $p \times p$ matrices and $p \times p$ diagonal matrices over $\mathbb{R}$, respectively. Further let $D_{n-1} \in \mathbb{M}_{n-1}$ given by $\left(D_{n-1}\right)_{i,j}=\sqrt{2/n}\sin\left(ij\pi/n\right)$ and define
\begin{eqnarray}
	\lambda_{i,n-1}:=4\sin^2\left(i\pi/\left(2n\right)\right)
	\quad i=1,\ldots,n-1 \ ,
	\label{def.lambdai}
\end{eqnarray}
the eigenvalues of the covariance matrix $K_{n-1} \in \mathbb{M}_{n-1}$ of the $\MA(1)$ process $\Delta_i\epsilon_{i,n}:=\epsilon_{i+1,n}-\epsilon_{i,n}$, $i=1,\ldots,n-1$.   More explicitly $K_{n-1}$ is tridiagonal and
\begin{eqnarray}
	\left(K_{n-1}\right)_{i,j}
	=
	\begin{cases}
	2 \quad \text{for} \quad i=j \\
	-1 \quad \text{for} \quad \left|i-j\right|=1 \\
	0 \quad \text{else}
	\end{cases}
	.
	\label{def.k}
\end{eqnarray}
Note that we can diagonalize $K_{n-1}$ explicitly by $K_{n-1}=D_{n-1}\Lambda_{n-1} D_{n-1}$, where $\Lambda_{n-1}$ is diagonal with diagonal entries given by (\ref{def.lambdai}).

We will suppress the index $n-1$ and write $K$, $D$, $\Lambda$, $\lambda_i$ instead of $K_{n-1}$, $D_{n-1}$, $\Lambda_{n-1}$, and $\lambda_{i,n}$, respectively. We write $\left[x\right]:=\max_{z \in \mathbb{Z}}\left\{z\leq x\right\}$, $x\in \mathbb{R}$, the integer part of $x$. $\log()$ is defined to be the binary logarithm and in order to define estimators properly, we assume throughout the paper additionally $n> 16.$

\section{Estimators and Rates of Convergence}

\subsection{Estimation of $\tau^2$}
\label{sec.esttau}

Before we will turn to the estimation of the volatility $\sigma^2$, we will first discuss estimation of the noise variance, i.e. $\tau^2$. Let $J^\tau_n \in \mathbb{D}_{n-1}$ given by
\begin{eqnarray*}
	\left(J^{\tau}_n\right)_{i,j}:=
	\begin{cases}
	\left(n-n/\log n\right)^{-1}
	\lambda_i^{-1}\delta_{i,j}, \quad \text{for} \quad \left[n/\log n\right]\leq i,j\leq n-1 \\
	0 \quad \text{otherwise}
	\end{cases}
	,	
\end{eqnarray*}
where $\lambda_i$ is defined by (\ref{def.lambdai}) and $\delta_{i,j}$ denotes the Kronecker delta. We consider models (\ref{eq.mod2}) and (\ref{eq.mod}), simultaneously. Let
\begin{eqnarray}
	\hat t_{k,0}
	:=
	\left(\Delta Y^k\right)^t D J^{\tau}_n D^t \left(\Delta Y^k\right).
	\label{eq.deftk}
\end{eqnarray}
In Lemma \ref{lem.tlem} it will be shown that $\hat t_{k,0}$ is a $\sqrt{n}-$consistent estimator of $$t_{k,0}:= \newline \int_0^1 \tau^2(x)f_k^2\left(x\right)dx.$$ Note that for $k\geq1$ this means $t_{k,0}= \int_0^1 \tau^2(x)\psi_0(x)dx+2^{-1/2} \int_0^1 \tau^2(x)\psi_k(x)dx$. Define $Z:=D\left(\Delta Y^k\right)$ and denote by $Z_i$ the $i$-th component of $Z$. Then
\begin{eqnarray}
 	\hat t_{k,0}= \left(n-n/\log n\right)^{-1}
	\sum_{i=\left[n/\log n\right]}^{n-1} \lambda_i^{-1}Z_i^2.
	\label{eq.alttau}
\end{eqnarray}
Hence this also can be seen as a spectral filter in Fourier domain, where we cut off the first $n/\log n \ $ frequencies. Note that for $i\geq 1$, $2^{1/2}\left(t_{i,0}-t_{0,0}\right)=\int_0^1\tau^2(x)\psi_i(x)dx$ is the $i$-th series coefficient with respect to basis (\ref{eq.cosbas}). This observation suggests to construct the cosine series estimator
\begin{eqnarray}
	\hat \tau^2_N(t) := \hat t_{0,0} +2\sum_{i=1}^N \left(\hat t_{i,0}
	-\hat t_{0,0} \right)\cos\left(i\pi t\right).
	\label{eq.tsdef} 
\end{eqnarray}

The next result provides the rate of convergence of $\hat \tau^2_N$ uniformly within Sobolev s-ellipsoids. To this end a version of the continuous Sobolev embedding theorem is required for non-integer indices $\alpha, \beta$ (see Lemma \ref{lem.sobolem}). A proof of the following Theorem can be found in the supplementary material.
\begin{thm}[$\MISE$ of $\hat \tau^2_N(t)$]
\label{thm.misetthm}
Let $\hat \tau_N^2(t)$ as defined in (\ref{eq.tsdef}). Assume $\beta>1,$ and $Q, \bar Q>0$. Further suppose that $N=N_n = o\left(n^{1/2}/\log n\right).$ Assume either model (\ref{eq.mod2}) and $\alpha>1/2$ or model (\ref{eq.mod}) and $\alpha>3/4$. Then it holds 
\begin{align*}
	\sup_{\sigma^2\in \Theta^b_s\left(\alpha,Q\right), \tau^2\in \Theta^b_s\left(\beta,\bar Q\right)}
	\MISE\left(\hat \tau_N^2\right)
	=O\left(N^{-2\beta}+Nn^{-1}\right).
\end{align*}
Minimizing the r.h.s. yields $N^*=O\left(n^{1/\left(2\beta+1\right)}\right)$ and consequently
\begin{align*}
	\sup_{\sigma^2\in \Theta^b_s\left(\alpha,Q\right), \tau^2\in \Theta^b_s\left(\beta,\bar Q\right)}
	\MISE\left(\hat \tau_{N^*}^2\right)=O\left(n^{-2\beta/\left(2\beta+1\right)}\right).
\end{align*}
\end{thm}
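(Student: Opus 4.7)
The strategy is a bias--variance decomposition tailored to the fact that the estimated coefficients are the differences $\hat t_{k,0}-\hat t_{0,0}$. Since $f_k^2=\psi_0+2^{-1/2}\psi_k$ with $\psi_k(x)=\sqrt 2\cos(k\pi x)$, for $k\geq 1$ we have $t_{k,0}-t_{0,0}=\int_0^1\tau^2(x)\cos(k\pi x)\,dx=:\theta_k^\tau$, while $t_{0,0}=\theta_0^\tau=\int_0^1\tau^2$. Thus $\tau^2(t)=t_{0,0}+2\sum_{i\geq 1}(t_{i,0}-t_{0,0})\cos(i\pi t)$ is precisely the cosine-series expansion of $\tau^2$, and the Sobolev s-ellipsoid index $\beta$ measures the decay of its coefficients. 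By orthonormality of $\{1,\sqrt 2\cos(k\pi\cdot)\}_{k\geq 1}$ on $L^2[0,1]$, Parseval's identity gives
\begin{equation*}
\MISE(\hat\tau_N^2)=\E(\hat t_{0,0}-t_{0,0})^2+2\sum_{i=1}^N\E\bigl[(\hat t_{i,0}-t_{i,0})-(\hat t_{0,0}-t_{0,0})\bigr]^2+2\sum_{i>N}(\theta_i^\tau)^2,
\end{equation*}
a clean separation into a variance part (first two terms) and a squared-bias part (the tail).

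The truncation bias is immediate from the s-ellipsoid constraint $\sum_i i^{2\beta}(\theta_i^\tau)^2\leq \bar Q$: the tail is bounded by $2\bar Q(N+1)^{-2\beta}=O(N^{-2\beta})$, uniformly over $\tau^2\in\Theta_s^b(\beta,\bar Q)$. For the variance contribution, I would apply $(a-b)^2\leq 2(a^2+b^2)$ so that each summand is controlled by $2\E(\hat t_{i,0}-t_{i,0})^2+2\E(\hat t_{0,0}-t_{0,0})^2$. The decisive input is then Lemma \ref{lem.tlem}, which provides the uniform $\sqrt n$-consistency $\E(\hat t_{k,0}-t_{k,0})^2=O(1/n)$, valid in $k$ and in $\sigma^2,\tau^2$ under the stated smoothness and boundedness assumptions. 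This yields a variance part of size $O(N/n)$, and altogether $\MISE(\hat\tau_N^2)=O(N^{-2\beta}+Nn^{-1})$. Minimizing in $N$ produces $N^\ast\asymp n^{1/(2\beta+1)}$, which is compatible with the hypothesis $N=o(n^{1/2}/\log n)$ because $\beta>1$ forces $1/(2\beta+1)<1/3$; the resulting rate is $n^{-2\beta/(2\beta+1)}$.

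The hard part is Lemma \ref{lem.tlem}. After the orthogonal transformation $Z=D\Delta Y^k$ the coordinates $Z_i$ have covariance of the form $\lambda_i t_{k,0}$ plus a perturbation generated by the Brownian (respectively space-scaled) part; the spectral filter $J_n^\tau$ averages only indices $i\geq n/\log n$, where $\lambda_i\asymp 1$, so that $\lambda_i^{-1}Z_i^2$ is an approximately unbiased estimate of $t_{k,0}$. Establishing the $\sqrt n$-rate then amounts to a careful variance computation for a quadratic form in Gaussian plus microstructure-noise variables, together with a uniform bound on cross-terms involving $\int_0^1\sigma^2 f_k^2$ and its derivatives. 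These cross-terms are tame only under the smoothness thresholds $\alpha>1/2$ in model (\ref{eq.mod2}) and the stronger $\alpha>3/4$ in model (\ref{eq.mod}), the latter reflecting the additional derivative $\sigma'$ appearing in the SDE representation of the space-scaled Brownian motion.
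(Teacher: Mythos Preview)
Your proposal is correct and follows essentially the same route as the paper: a Parseval/bias--variance decomposition in the cosine basis, reducing everything to the uniform bounds on $\E(\hat t_{k,0})-t_{k,0}$ and $\Var(\hat t_{k,0})$ supplied by Lemma~\ref{lem.tlem}, plus the standard s-ellipsoid tail bound $\sum_{i>N}(\theta_i^\tau)^2\leq \bar Q N^{-2\beta}$. The only cosmetic difference is that the paper first splits $\MISE=\int\Bias^2+\int\Var$ and applies Parseval to each piece separately (as in the proof of Theorem~\ref{thm.misesthm}), whereas you apply Parseval directly to $\E\|\hat\tau_N^2-\tau^2\|_2^2$ and then bound the MSE of each coefficient; both paths invoke the same lemma and yield the same $O(N^{-2\beta}+Nn^{-1})$.
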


\begin{rem}
Note that for model (\ref{eq.mod2}) Theorem \ref{thm.misetthm} holds, whenever $\alpha>1/2$. Hence the Brownian motion part of the model can be viewed as a nuisance parameter, not affecting rates for estimation of $\tau^2$. However, for model (\ref{eq.mod}) $\alpha>3/4$ is required here. This more restrictive assumption is essentially a consequence of the fact that the process $\sigma\left(i/n\right)W_{i/n}$ is in general no
martingale.
\end{rem}

\begin{rem}
The result from Theorem \ref{thm.misetthm} can be extended to $1/2<\beta\leq 1$ in model (\ref{eq.mod2}) and to $1/2<\alpha\leq 3/4$, $1/2<\beta\leq 1$ in model (\ref{eq.mod}). 
Let $\tilde t_{k,0}$ be defined as $\hat t_{k,0}$ in (\ref{eq.deftk}) but $J_n^\tau$ is now replaced by $\tilde J_n^\tau \in \mathbb{D}_{n-1},$
\begin{align*}
	\left(\tilde J_n^\tau\right)_{i,j}
	=
	\begin{cases}
	2n^{-1}
	\lambda_i^{-1}\delta_{i,j}, \quad \text{for} \quad \left[n/2\right]\leq i,j\leq n-1 \\
	0 \quad \text{otherwise}
	\end{cases}
	.
\end{align*}
Introduce further the estimator $\tilde \tau^2_N\left(t\right)= \tilde t_{0,0} +2\sum_{i=1}^N \left(\tilde t_{i,0} -\tilde t_{0,0} \right)\cos\left(i\pi t\right).$ Further suppose that $N =O\left(n^{1/\left(2\beta+1\right)}\right).$ Then we obtain by slight modifications of the proof of Theorem \ref{thm.misetthm} for $\beta>1/2,$  $\alpha>1/2$ and $Q, \bar Q >0$ 

\begin{itemize}
\item[(i)] Assume model (\ref{eq.mod2}). Then it holds
\begin{eqnarray*}
	\sup_{\sigma^2\in \Theta^b_s\left(\alpha,Q\right), \tau^2\in \Theta^b_s\left(\beta,\bar Q\right)}
	\MISE\left(\tilde \tau_N^2\right)
	=O\left(N^{-2\beta}+Nn^{-1}+Nn^{1-2\beta}\right)
\end{eqnarray*}
and $N^* = O\left(n^{\left(2\beta-1\right)/\left(2\beta+1\right)}\right)$ yields  
\begin{align*}
	\sup_{\sigma^2\in \Theta^b_s\left(\alpha,Q\right), \tau^2\in \Theta^b_s\left(\beta,\bar Q\right)}
	\MISE\left(\tilde \tau_{N^*}^2\right)=O\left(n^{-\left(4\beta^2-2\beta\right)/\left(2\beta+1\right)}\right).
\end{align*}
\item[(ii)] Assume model (\ref{eq.mod}). Then we have the expansion
\begin{align*}
	\sup_{\sigma^2\in \Theta^b_s\left(\alpha,Q\right), \tau^2\in \Theta^b_s\left(\beta,\bar Q\right)}
	&
	\MISE\left(\tilde \tau_N^2\right)
	=
	O\left(N^{-2\beta}+Nn^{-1}+Nn^{1-2\beta}+Nn^{2-4\alpha}\right),
\end{align*}
and the choice
\begin{eqnarray*}
	N^* =
	\begin{cases}
	O\left(n^{\left(2\beta-1\right)/\left(2\beta+1\right)}\right) \quad \text{for}
	\ \ \beta\leq1\wedge\left(2\alpha-1/2\right), \\
	O\left(n^{\left(4\alpha-2\right)/\left(2\beta+1\right)}\right)
	\quad \text{for}
	\ \ \alpha\leq3/4\wedge\left(\beta/2+1/4\right)
	\end{cases}
\end{eqnarray*}
yields
\begin{align*}
	&\sup_{\sigma^2\in \Theta^b_s\left(\alpha,Q\right), \tau^2\in \Theta^b_s\left(\beta,\bar Q\right)}
	\MISE\left(\tilde \tau_{N^*}^2\right) \\
	&\quad \quad \quad 
	=
	\begin{cases}
	O\left(n^{-\left(4\beta^2-2\beta\right)/\left(2\beta+1\right)}\right)  \quad &\text{for}
	\ \ \beta\leq1\wedge\left(2\alpha-1/2\right), \\
	O\left(n^{-\left(2-2\alpha\right)/\left(2\beta+1\right)}\right) \quad &\text{for}
	\ \ \alpha\leq3/4\wedge\left(\beta/2+1/4\right).
	\end{cases}
\end{align*}
\end{itemize}
\end{rem}

\begin{rem}
It is also possible, although more technical, to compute the asymptotic constant of the estimator $\hat \tau^2_{N^*}$. Suppose that the microstructure noise is Gaussian and assume model (\ref{eq.mod2}) and $\beta>1$ or (\ref{eq.mod}) and $\beta>1, \alpha>3/4$, then we have more explicitly
\begin{eqnarray*}
	\MISE\left(\hat \tau^2_{N^*}\right)
	= \frac{2N^*}n\int_0^1 \tau^4(x) dx + \sum_{k=N^*+1}^\infty \left(\int_0^1 \tau^2(x) \psi_k(x)dx\right)^2
	+o\left(N^*n^{-1}\right).
\end{eqnarray*}
\end{rem}

\begin{rem}
There are of course simpler estimators for $t_{k,0}$. For instance if we replace $J_n^\tau$ in (\ref{eq.deftk}) by $\left(2n\right)^{-1}\mathbb{I}_{n-1}$, where $\mathbb{I}_{n-1}\in \mathbb{D}_{n-1}$ denotes the identity matrix, we obtain the quadratic variation estimator for $t_{k,0}$ (cf. \cite{ban}) and it is not difficult to show that this estimator attains the optimal rate of convergence. This approach could even be extended to a nonparametric estimator of the form (\ref{eq.tsdef}). However, the single Fourier coefficients are not estimated efficiently, since in the case when the microstructure noise is Gaussian the asymptotic constant is $3n^{-1}\int \tau_k^4(x)dx$ (this is a straightforward extension of Theorem A.1 in \cite{zha1}) whereas for our estimator we have $2n^{-1}\int \tau_k^4(x)dx$ (see Lemma \ref{lem.tlem}). If $\tau$ is constant it can be easily seen that estimators in (\ref{eq.deftk}) are efficient for $k=0$ whereas quadratic variation is not.
\end{rem}

\begin{rem}
 In practical application it would be more natural to use instead of $n/\log n \ $ in (\ref{eq.alttau}) other cut-off frequencies e.g. $n^\gamma/\log n$ or $qn$, where $1/2< \gamma\leq 1$, $0<q<1$. Smaller $\gamma$ decreases the variance while on the other hand increases the bias of the estimator.
\end{rem}

\subsection{Estimation of $\sigma^2$}
\label{sec.estsigma}
Define $J_n \in \mathbb{D}_{n-1}$ by
\begin{eqnarray}
	\left(J_n\right)_{i,j}=
	\begin{cases}
	\sqrt{n}\delta_{i,j}, \quad \text{for} \quad \left[n^{1/2}\right]+1\leq i,j\leq 2\left[n^{1/2}\right] \\
	0 \quad \text{otherwise}
	\end{cases}
	.
	\label{eq.jndef}
\end{eqnarray}
Similar, as for the estimation of $\tau^2$ we first introduce an estimator of appropriate Fourier coefficients by
\begin{align}
	\hat s_{k,0}
	=
	\left(\Delta Y^k\right)^t D J_n D^t \left(\Delta Y^k\right)
	-7\pi^2\hat t_{k,0}/3.
	\label{eq.defsk}
\end{align}
The second part, i.e. $-7\pi^2\hat t_{k,0}/3$ is a bias correcting term, where the constant $7\pi^2/3$ is due to the choice of cut-off points $\left[n^{1/2}\right]+1$ and $2\left[n^{1/2}\right]$ in (\ref{eq.jndef}). As we will see, the estimator of $\hat t_{k,0}$ has better convergence properties than the first term in $\hat s_{k,0}$, and hence does not affect the asymptotic variance. Similar to  (\ref{eq.tsdef}), we put
\begin{eqnarray}
	\hat \sigma^2_N(t)= \hat s_{0,0}
	+2\sum_{i=1}^N \left(\hat s_{i,0}-\hat s_{0,0}\right)\cos\left(i\pi t\right).
	\label{eq.ssdef}
\end{eqnarray}

\begin{thm}[$\MISE$ of $\hat \sigma^2_N$]
\label{thm.misesthm}
Let $\hat \sigma_N^2$ as defined in (\ref{eq.ssdef}). Suppose that $N=N_n =o\left(n^{1/4}\right)$, $\beta>5/4$ and $Q, \bar Q>0$.
Assume model (\ref{eq.mod2}) and $\alpha>3/4$ or model (\ref{eq.mod}) and $\alpha>3/2$. Then it holds 
\begin{align*}
	\sup_{\sigma^2\in \Theta^b_s\left(\alpha,Q\right), \tau^2\in \Theta^b_s\left(\beta,\bar Q\right)}
	\MISE\left(\hat \sigma_N^2\right)
	= O\left(N^{-2\alpha}+Nn^{-1/2}\right)
\end{align*}
and minimizing the r.h.s. yields 
\begin{align*}
	\sup_{\sigma^2\in \Theta^b_s\left(\alpha,Q\right), \tau^2\in \Theta^b_s\left(\beta,\bar Q\right)}
	\MISE\left(\hat \sigma_{N^*}^2\right)
	= O\left(n^{-\alpha/\left(2\alpha+1\right)}\right)
\end{align*}
for $N^*=O\left(n^{1/\left(4\alpha+2\right)}\right)$.
\end{thm}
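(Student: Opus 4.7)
The plan is to decompose the MISE via Parseval's identity in the basis $\{\psi_k\}$. Since $\sigma^2(x)=s_{0,0}+2\sum_{k\ge 1}(s_{k,0}-s_{0,0})\cos(k\pi x)$ (writing $s_{k,0}=\int_0^1\sigma^2(x)f_k^2(x)dx$ exactly as for the $\tau$ case) and $\hat\sigma_N^2$ is the truncated version, I would write
\begin{align*}
\MISE(\hat\sigma_N^2) &= \E(\hat s_{0,0}-s_{0,0})^2+2\sum_{k=1}^N\E\bigl[(\hat s_{k,0}-s_{k,0})-(\hat s_{0,0}-s_{0,0})\bigr]^2 \\
&\quad +2\sum_{k=N+1}^\infty (s_{k,0}-s_{0,0})^2.
\end{align*}
The last (deterministic) tail is $O(N^{-2\alpha})$ directly from $\sigma^2\in\Theta_s^b(\alpha,Q)$, since for $k\ge 1$ the quantity $\sqrt 2(s_{k,0}-s_{0,0})$ is exactly the $k$-th coefficient of $\sigma^2$ in the basis \eqref{eq.cosbas}. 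Thus the problem reduces to showing a squared-bias bound $(\E\hat s_{k,0}-s_{k,0})^2=o(n^{-1/2})$ and a variance bound $\Var(\hat s_{k,0})=O(n^{-1/2})$ uniformly in $k\le N$; combined, the middle sum contributes $O(Nn^{-1/2})$, and balancing $N^{-2\alpha}+Nn^{-1/2}$ gives $N^\ast=O(n^{1/(4\alpha+2)})$ and the claimed rate.

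For the bias, I would expand $\E[(\Delta Y^k)^t D J_n D^t\Delta Y^k]=\operatorname{tr}(D J_n D^t\Cov(\Delta Y^k))$. Writing $\Delta Y^k=V^k+\eta^k$, where $V^k$ carries the Brownian part and $\eta^k$ the microstructure noise, the noise covariance is $\Cov(\eta^k)=T_k K T_k+R_n$ with $T_k=\operatorname{diag}(f_k(i/n)\tau(i/n))$ and a small remainder $R_n$ from the $\tau$-modulation across neighbouring indices. Since $K=D\Lambda D^t$ and the rows of $D$ are highly oscillatory for indices in $\{[\sqrt n]+1,\ldots,2[\sqrt n]\}$, a Riemann-sum / averaging argument gives
\begin{align*}
\operatorname{tr}(D J_n D^t T_k K T_k)\;\approx\;\sqrt n\!\!\sum_{i=[\sqrt n]+1}^{2[\sqrt n]}\!\!\lambda_i\cdot t_{k,0},
\end{align*}
and the elementary estimate $\sum_{i=[\sqrt n]+1}^{2[\sqrt n]}(i\pi/n)^2\to 7\pi^2/(3\sqrt n)$ together with $\lambda_i=(i\pi/n)^2(1+O(i^2/n^2))$ yields the constant $7\pi^2/3$; this is precisely what the correction term $-\tfrac{7\pi^2}{3}\hat t_{k,0}$ subtracts (using the already established $\sqrt n$-consistency of $\hat t_{k,0}$ from Lemma \ref{lem.tlem}). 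The BM piece contributes $\operatorname{tr}(D J_n D^t\Cov(V^k))\approx s_{k,0}$: in model \eqref{eq.mod2} the increments $V^k_i$ are orthogonal with variance $f_k^2(i/n)\int_{i/n}^{(i+1)/n}\sigma^2$, and the trace again collapses by a Riemann-sum argument; in model \eqref{eq.mod}, the non-martingale structure forces one to split $V^k_i=f_k(i/n)\sigma((i+1)/n)(W_{(i+1)/n}-W_{i/n})+f_k(i/n)\Delta_i\sigma\cdot W_{i/n}$ and control the second piece by the stronger smoothness $\alpha>3/2$, which is precisely where the tighter condition enters.

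For the variance I would write the quadratic form $\hat s_{k,0}$ as $X^t A X+$ (linear correction) with $X$ Gaussian-plus-noise, and use $\Var(X^tAX)=2\operatorname{tr}((A\Cov X)^2)+\Cum_4$-corrections. The cumulant part is finite under $\E\epsilon^4<\infty$ and contributes only through the fourth cumulant of $\epsilon$; since the entries of $A=DJ_nD^t$ are bounded by a constant in $L^\infty$ and $J_n$ has Frobenius norm $\sqrt{n}\cdot\sqrt n=n$, one finds $\operatorname{tr}((A\Cov X)^2)=O(n^{-1})$ after absorbing the factor $\sqrt n$ in $J_n$ and using $\lambda_i\asymp i^2/n^2$ on the active band. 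Care is needed to bound $\Cov(\hat s_{k,0},\hat s_{0,0})$ and the cross terms from $\hat t_{k,0}$; the latter is negligible because $\hat t_{k,0}$ is $\sqrt n$-consistent, so the correction contributes $O(n^{-1})$ to the variance. Summing yields the $O(Nn^{-1/2})$ bound.

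The main obstacle is the bias control of the Brownian piece in model \eqref{eq.mod} together with the uniformity over the Sobolev $s$-ellipsoid: the non-orthogonality of the $V^k_i$ forces one to trade smoothness of $\sigma$ against the extra $W_{i/n}$-factor, which produces the gap between $\alpha>3/4$ (model \eqref{eq.mod2}) and $\alpha>3/2$ (model \eqref{eq.mod}), and the assumption $\beta>5/4$ guarantees that the $\hat t_{k,0}$-correction is uniformly of lower order across $k\le N=o(n^{1/4})$. Once these two technical estimates, together with the Riemann-sum/trace computations, are in place, the stated balance $N^\ast=O(n^{1/(4\alpha+2)})$ follows immediately, as does the rate $n^{-\alpha/(2\alpha+1)}$.
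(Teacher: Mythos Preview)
Your outline is essentially the paper's own proof: Parseval decomposition of the $\MISE$ in the basis $\{\psi_k\}$, the tail $\sum_{k>N}\langle\psi_k,\sigma^2\rangle^2=O(N^{-2\alpha})$ from the ellipsoid, and then a lemma (the paper's Lemma~\ref{lem.sklem}) giving uniform bias and variance bounds for $\hat s_{k,0}$ via the decomposition $\Delta Y^k=X_{1,k}+X_{2,k}+Z_{1,k,l}+Z_{2,k}$ and the trace computations you sketch, including the identification of $7\pi^2/3$ from $\sqrt n\sum_{i=[\sqrt n]+1}^{2[\sqrt n]}\lambda_i$. One numerical slip to fix when you flesh this out: $\|J_n\|_F^2=[\sqrt n]\cdot n\asymp n^{3/2}$ (not $n^2$), and the per-coefficient variance is $O(n^{-1/2})$, not $O(n^{-1})$---the dominant contribution comes from $\|J_n^{1/2}DT_k^2KDJ_n^{1/2}\|_F^2\le \lambda_{2[\sqrt n]}^2\,n^{1/2}\,\tau_{\max}^2\,\Tr(J_n)=O(n^{-1/2})$, which is exactly what your final $O(Nn^{-1/2})$ requires.
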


The proof of Theorem \ref{thm.misesthm} is given in Section \ref{subsec.estofsigma}.

\begin{rem}
It is also possible to extend this result for less smooth functions $\sigma^2$ and $\tau^2$.

\begin{itemize}
\item[(i)] Assume model (\ref{eq.mod2}) and $\alpha>1/2, \ \beta>1$. Then it holds
\begin{align*}
	&\sup_{\sigma^2\in \Theta^b_s\left(\alpha,Q\right), \tau^2\in \Theta^b_s\left(\beta,\bar Q\right)}
	\MISE\left(\hat \sigma_N^2\right) \\
	&\quad \quad = O\left(N^{-2\alpha}+Nn^{-1/2}
 	+Nn^{2-2\beta}+Nn^{1-2\alpha}\right),
\end{align*}
and
\begin{eqnarray*}
	N^* =
	\begin{cases}
	O\left(n^{\left(2\alpha-1\right)/\left(2\alpha+1\right)}\right) \quad \text{for}
	\ \ \alpha\leq 3/4\wedge \left(\beta-1/2\right), \\
	O\left(n^{\left(2\beta-2\right)/\left(2\alpha+1\right)}\right) \quad \text{for}
	\ \ \beta\leq 5/4\wedge \left(\alpha +1/2\right)
	\end{cases}
\end{eqnarray*}
yields
\begin{align*}
	&\sup_{\sigma^2\in \Theta^b_s\left(\alpha,Q\right), \tau^2\in \Theta^b_s\left(\beta,\bar Q\right)}
	\MISE\left(\hat \sigma_{N^*}^2\right) \\
	&\quad \quad =
	\begin{cases}
	O\left(n^{-2\alpha\left(2\alpha-1\right)/\left(2\alpha+1\right)}\right) \quad &\text{for}
	\ \ \alpha\leq 3/4\wedge \left(\beta-1/2\right) ,\\
	O\left(n^{-2\alpha\left(2\beta-2\right)/\left(2\alpha+1\right)}\right)  \quad &\text{for}
	\ \ \beta\leq 5/4\wedge \left(\alpha+1/2\right).
	\end{cases}
\end{align*}
\item[(ii)] Assume model (\ref{eq.mod}) and $\alpha>3/2, \ \beta>1$. Then it holds
\begin{eqnarray*}
	\sup_{\sigma^2\in \Theta^b_s\left(\alpha,Q\right), \tau^2\in \Theta^b_s\left(\beta,\bar Q\right)}
	\MISE\left(\hat \sigma_N^2\right)
	= O\left(N^{-2\alpha}+Nn^{-1/2}
	+Nn^{2-2\beta}\right),
\end{eqnarray*}
and $N^*= O\left(n^{\left(2\beta-2\right)/\left(2\alpha+1\right)}\right)$ yields
\begin{align*}
	\sup_{\sigma^2\in \Theta^b_s\left(\alpha,Q\right), \tau^2\in \Theta^b_s\left(\beta,\bar Q\right)}
	\MISE\left(\hat \sigma_{N^*}^2\right)= O\left(n^{-2\alpha\left(2\beta-2\right)/\left(2\alpha+1\right)}\right).
\end{align*}
\end{itemize}
\end{rem}

\begin{rem}
In analogy to (\ref{eq.alttau}), the estimator $\hat s_{k,0}$ can also be viewed as a spectral filter in Fourier domain, where essentially only the frequencies $n^{1/2},\ldots,2n^{1/2}$ play a role. For practical purposes one can generalize this to estimators where the frequencies $k,\ldots,\left[cn^{1/2}\right]$, $c>0$ are used. If $\sigma$ is assumed to be very smooth, one even may set $k=1$. In this more general setting, the constant $-7\pi^2/3$ in the definition of the estimator has to be replaced by $-n/\left(\left[cn^{1/2}\right]-k\right)\sum_{i=k}^{\left[cn^{1/2}\right]}\lambda_{i}.$
\end{rem}

\begin{rem}
\label{rem.timing}
Since the matrix $D$ in the definition of $\hat s_{k,0}$ is a discrete sine transform (for a definition see \cite{brit}) the estimator $\hat \sigma^2_N$ can be calculated explicitly taking $O\left(Nn\log n \right)$ steps.
\end{rem}

\section{Minimax}
\label{sec.minimax}

In this section we will discuss the optimality of the proposed estimators. To this end we establish lower bounds with respect to Sobolev {\it s}-ellipsoids.

\begin{thm}
Assume model (\ref{eq.mod2}) or model (\ref{eq.mod}), $\alpha\in\mathbb{N}\setminus \left\{0\right\}$. Further assume $\tau$ constant. Then there exists a $C>0$ (depending only on $\alpha, Q, l, u$), such that
\begin{eqnarray*}
	\liminf_{n \rightarrow \infty} \inf_{\hat \sigma^2_n} 
	\sup_{\sigma^2 \in \Theta^b_s\left(\alpha,Q\right)}
	\E\left(n^{\frac{\alpha}{2\alpha+1}}\left\|\hat \sigma^2-\sigma^2\right\|_2^2
	\right)\geq C.
\end{eqnarray*}
\end{thm}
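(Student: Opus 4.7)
Since the lower bound only needs to hold for one noise distribution in the model class, I specialize to $\ep_{i,n} \sim \mathcal N(0,1)$ i.i.d.\ and $\tau \equiv \tau_0$ a positive constant; under this sub-model $Y$ is centered Gaussian and all KL-divergences reduce to the classical trace formula. Fix an even, nonnegative bump $\phi\in C_c^\infty(-\tfrac12,\tfrac12)$ with $\int\phi>0$, set $h_n = c_1 n^{-1/(2(2\alpha+1))}$, $M = \lfloor 1/(2h_n)\rfloor$, $x_j = (2j-1)h_n$, and for $\eta\in\{0,1\}^M$ define
\begin{eqnarray*}
	\sigma^2_\eta(x) = c_0 + c_2 h_n^\alpha \sum_{j=1}^M \eta_j\, \phi((x-x_j)/h_n).
\end{eqnarray*}
The bumps have pairwise disjoint support strictly inside $(0,1)$; combined with $\alpha\in\mathbb N$, Parseval applied to $(\sigma^2_\eta - c_0)^{(\alpha)}$ gives $\sum_{i\ge1} i^{2\alpha}\theta_i^2 \le c_2^2 \|\phi^{(\alpha)}\|_2^2/(4\pi^{2\alpha}) \le Q$ for small $c_2$, where $\{\theta_i\}$ are the coefficients of $\sigma^2_\eta$ in basis~(\ref{eq.cosbas}); the pointwise bounds $l\le\sigma^2_\eta\le u$ are secured by appropriate $c_0,c_1,c_2$. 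The per-coordinate $L^2$-separation is $\|\sigma^2_\eta-\sigma^2_{\eta\oplus e_j}\|_2^2 = c_2^2\|\phi\|_2^2 h_n^{2\alpha+1}$, so the hypercube-separation is $\asymp M h_n^{2\alpha+1} \asymp h_n^{2\alpha} \asymp n^{-\alpha/(2\alpha+1)}$, matching the target rate.

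Next, pass to the sufficient statistic $\Delta Y$, which in model~(\ref{eq.mod2}) is $\mathcal N(0, V_\eta)$ with $V_\eta = S_\eta + \tau_0^2 K$, where $S_\eta$ is diagonal with entries $\int_{i/n}^{(i+1)/n}\sigma^2_\eta(s)ds$ and $K$ is the tridiagonal matrix (\ref{def.k}). For neighbors $\eta,\eta'$ differing in a single coordinate $j$, $\Delta := V_\eta-V_{\eta'}$ is diagonal, supported on the $|I_j|\sim nh_n$ indices of the $j$-th bump, with entries of magnitude $\sim h_n^\alpha/n$. The Gaussian KL bound gives
\begin{eqnarray*}
	2\operatorname{KL}(P_\eta\|P_{\eta'})
	\le \tfrac12 \Tr\bigl((V_{\eta'}^{-1}\Delta)^2\bigr)
	= \tfrac12 \sum_{i,k\in I_j} \Delta_{ii}\Delta_{kk}(V_{\eta'}^{-1})_{i,k}^2,
\end{eqnarray*}
and the crucial step is to show that $V_{\eta'}^{-1}$ has off-diagonal decay on a length-scale of $\sqrt n$ (and not $n$). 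Since the essentially Toeplitz symbol of $V_{\eta'_0} := (c_0/n)I + \tau_0^2 K$ is $c_0/n + 4\tau_0^2\sin^2(\xi/2)$, its reciprocal has Lorentzian-type behavior at the origin with width $\sim 1/\sqrt n$; an inverse-Fourier / residue estimate then yields $|(V_{\eta'}^{-1})_{i,k}| \lesssim \sqrt n\, \exp(-c|i-k|/\sqrt n)$, with the small $\eta'$-dependent perturbation contributing only to the constants. Substituting gives $\operatorname{KL}(P_\eta\|P_{\eta'}) \lesssim \sqrt n \cdot h_n^{2\alpha+1} = c_1^{2\alpha+1}$, which is smaller than $1/8$ for $c_1$ small. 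Assouad's lemma then delivers
\begin{eqnarray*}
	\inf_{\hat\sigma^2}\sup_\eta \E\|\hat\sigma^2-\sigma^2_\eta\|_2^2 \;\gtrsim\; M h_n^{2\alpha+1} \;\gtrsim\; n^{-\alpha/(2\alpha+1)},
\end{eqnarray*}
which is the claim.

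The main obstacle is precisely the $\sqrt n$ off-diagonal decay of $V^{-1}$: this is the statistical manifestation of the extra $1/2$ degree of ill-posedness emphasized in the introduction, and it is the only step where the argument departs substantively from the classical nonparametric-regression lower bound (whose corresponding KL would involve the full scale $n$ and would give the faster rate $n^{-2\alpha/(2\alpha+1)}$). For model~(\ref{eq.mod}), writing $\sigma(i/n)W_{i/n}$ as $\int_0^{i/n}\sigma\,dW + \int_0^{i/n}\sigma'W$ identifies the difference from model~(\ref{eq.mod2}) as a lower-order covariance term, and the same KL estimate, hence the same conclusion, goes through after a minor bookkeeping.
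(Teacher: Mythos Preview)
Your proposal is correct and follows essentially the same route as the paper. The paper's own proof is terse: it defers the construction of the test functions and the KL/hypothesis-testing machinery to \cite{mun}, Theorem~2.1, and only supplies the missing verification that those bump-based alternatives lie in $\Theta^b_s(\alpha,Q)$ rather than merely in a H\"older ball, using the characterization $\mathcal W(\alpha,\bar C)=\Theta_s(\alpha,C)$ of Theorem~\ref{thm.equiv}. Your Parseval argument for membership is exactly this characterization, and your compact support in $(0,1)$ guarantees the boundary conditions $f^{(l)}(0)=f^{(l)}(1)=0$ that Theorem~\ref{thm.equiv} requires.

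The one visible difference is organisational: the paper (via \cite{mun}) indexes hypotheses $\sigma^2_{0,n},\dots,\sigma^2_{M,n}$ and invokes a Fano/Varshamov--Gilbert reduction, whereas you parametrise by the full hypercube $\{0,1\}^M$ and apply Assouad. Both reductions need the same ingredient, namely a per-bump KL bound of order $O(1)$, and you have correctly isolated this as the crucial step: the inverse covariance $V^{-1}=(S+\tau_0^2K)^{-1}$ decays on the length scale $\sqrt n$ rather than $n$, giving $\mathrm{KL}\lesssim \sqrt n\,h_n^{2\alpha+1}$ and hence the slower rate $n^{-\alpha/(2\alpha+1)}$. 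That spectral/Toeplitz heuristic for the decay is sound; in a final write-up you would want to make the perturbation from the exactly Toeplitz matrix $(c_0/n)I+\tau_0^2K$ to the actual $V_{\eta'}$ rigorous (e.g.\ via a resolvent identity), and replace the second-order KL approximation by a genuine upper bound such as $\mathrm{KL}(P_\eta\|P_{\eta'})\le \tfrac12\|V_{\eta'}^{-1/2}\Delta V_{\eta'}^{-1/2}\|_F^2$ valid when $\|V_{\eta'}^{-1/2}\Delta V_{\eta'}^{-1/2}\|<1$.
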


\begin{proof}
The proof relies on a multiple hypothesis testing argument and is close to the proof given in \cite{mun}, Theorem 2.1. However, the lower bounds there are established with respect to the space of H\"older continuous functions
of index $\alpha$ on the interval $\left[0,1\right],$ i.e. for $l<u$
\begin{align*}
	&\mathcal{C}^b\left(\alpha, L\right)
	:= \mathcal{C}^b\left(\alpha, L, \left[l,u\right]\right) := 
	\left\{
	f: f^{\left(p\right)} \ \text{exists for } p=\left[\alpha\right],
	\right. \\  & \quad \left.
	\quad \left|f^{\left(p\right)}(x)-f^{\left(p\right)}(y)\right|
	\leq L\left|x-y\right|^{\alpha-p}, \ \forall x,y \in I, \ 0<l\leq f\leq u <\infty
	\right\}.
\end{align*}
Therefore, the statement above does not follow immediately from \cite{mun}, Theorem 2.1 because of $\mathcal{C}^b\left(\alpha,L\right)\varsubsetneq \Theta^b_s\left(\alpha,Q\right)$ due to boundary effects. Here, we will only point out the difference to the proof of \cite{mun}, Theorem 2.1. We write $\sigma_{\min}, \ \sigma_{\max}$ for the lower and upper bound of $\sigma^2$, respectively, i.e. $\sigma^2\in \Theta^b_s\left(\alpha,Q, \left[\sigma_{\min}, \sigma_{\max}\right] \right)$.  Without loss of generality, we may assume that $\sigma_{\min}=1$. For the multiple hypothesis testing argument (cf. \cite{tsyb}) a specific choice of functions $\sigma^2_{i,n}$ is required. For a construction see \cite{mun}, proof of Theorem 2.1 where $L:=\left(2\pi^{2\alpha}Q\right)^{1/2}/\left\|K^{\left(\alpha\right)}\right\|_{\infty}.$  It remains to show
\begin{align*}
	\sigma^2_{i,n}\in \Theta^b_s\left(\alpha,Q\right), \quad i=0,1,\ldots,M.
\end{align*}
Due to the construction of $\sigma^2_{i,n},$ we have $\sigma_{i,n}^2(t)=1$ for $t\in [0,1/4]\cup [3/4,1]$ and $\sigma_{i,n}^{2 \ \left(l\right)}\left(0\right) = \sigma_{i,n}^{2 \ \left(l\right)}\left(1\right) = 0$ for $l\in \left\{0, 1,\ldots, \alpha\right\}.$ Thus, $\sigma_{i,n}^2 \in \mathcal{W}\left(\alpha,L^2\left\|K^{\left(\alpha\right)}\right\|_\infty^2\right)$ (for a definition see Equation (\ref{eq.wdef})), $\alpha\in\left\{1,2,3,\ldots\right\}$, $j=0,\ldots,M$.  Hence by Theorem \ref{thm.equiv} it follows $\sigma^2_{i,n} \in \Theta_s\left(\alpha,Q\right)$ for $i=0,\ldots,M$.
\end{proof}

\section{Simulations}
\label{sec.sim}

In this section we briefly illustrate the performance of our estimators. Our aim is not to give a comprehensive simulation study, rather we would like to illustrate the behaviour of the estimator when assumptions of Theorems \ref{thm.misetthm} and \ref{thm.misesthm} are violated. In the following we plotted our estimator to simulated data, where we always set $n=25.000.$ From the point of view of financial statistics this is approximately the sample size obtained over a trading day ($6.5$ hours) if log-returns are sampled at every second. For simplicity, we will choose $N$ in (\ref{eq.tsdef}) and (\ref{eq.ssdef}) as the minimizer of $\left\|\hat \tau^2-\tau^2\right\|_n^2$ and $\left\|\hat \sigma^2-\sigma^2\right\|_n^2$, respectively, which is in practice unknown. Of course, proper selection of the threshold $N^*$ is of major importance for the performance of the estimator. To this end various methods are available, among others, cross validation techniques, balancing principles, and variants thereof could be employed (see e.g. \cite{DG04}, \cite{DMR96}, \cite{N08} and \cite{PS05}). A thorough investigation is postponed to a separate paper. Throughout our simulations we assumed $\tau=0.01$ and concentrated mainly on estimation of $\sigma^2,$ as it is the more challenging task.

In Figure \ref{fig:s1p} we have displayed the estimator for $\sigma(t)=\left(2+\cos\left(2\pi t\right)\right)^{1/2}$. Note that by Definition \ref{def.ssobo}, $\sigma^2$ has "infinite" smoothness, i.e. for any $\alpha>0$, we can find a $Q<\infty$, such that $\sigma^2\in \Theta_s\left(\alpha,Q\right).$ The reconstruction shows that estimation of $\tau^2$ can be done much easier than estimation of $\sigma^2$ although it is of smaller magnitude. In Figure \ref{fig:s1t2p}, we are interested in the behavior of the estimators if heavy-tailed microstructure noise is present. This was simulated by generating $\epsilon_{i,n}\sim 3^{-1/2} t\left(3\right)$, $i=1,\ldots,n$, i.i.d., where $t\left(3\right)$ denotes a t-distribution with $3$ degrees of freedom. We can see from Plot $1$ in Figure \ref{fig:s1t2p} that the resulting microstructure noise has some severe outliers according to the tail $x^{-4}$ of the density of $t(3)$. Nevertheless, estimation of $\tau^2$ and $\sigma^2$ is not visibly affected by the distribution of the noise. 

In the subsequent figures we illustrate the behaviour of the estimator when the required smoothness assumptions on $\sigma^2$ and $\tau^2$ are violated. To this end, we investigate in Figure \ref{fig:s3p} the situation when $\sigma$ is random itself, i.e. a realization from a Brownian motion, $\sigma\left(t\right)=3\left|\tilde W_t\right|.$ The Brownian motion $\left(\tilde W_t\right)_{t\in \left[0,1\right]}$ was modelled as independent from the Brownian motion in (\ref{eq.mod2}) and the microstructure noise process. It is of course not possible to reconstruct the complete path of $\sigma^2$, but as Figure \ref{fig:s3p} indicates, the estimators at least detects the smoothed shape of the path and so our estimator might already reveal some parts of the pattern of volatility also in case $\sigma$ is non-deterministic, which is certainly more realistic in most applications. 

Finally, in Figure \ref{fig:s5p} we investigated the case of $\sigma$ being a jump-function. We put $\sigma\left(t\right)=1+\mathbb{I}_{\left(\right. 1/2,1\left. \right]}\left(t\right),$ a function with jump at $t=1/2.$ Fourier series usually show a Gibbs phenomenon, i.e. an oscillating behavior at discontinuities. This behavior is also clearly visible in the graph of $\hat \sigma^2.$ In order to reconstruct jumps in volatility other methods certainly will be more suitable and are postponed to a separate paper.

{\bf Computational tasks:} We implemented the estimators in Matlab using the routine fft() for the discrete sine transform (see Remark \ref{rem.timing}). Calculation of the estimators for a sample size of $n=25.000$ took around $2$-$3$ seconds on a Intel Celeron 1.7 GHz processor. As mentioned in Remark \ref{rem.timing}, the estimator can be calculated in $O\left(Nn\log n\right)$ steps. If we choose $N$ with the optimal scale, i.e. $N\sim n^{1/\left(4\alpha+2\right)}$, we have for the complexity $O\left(Nn\log n\right)=o\left(n^{5/4}\log n\right),$ whenever $\alpha>1/2.$

\begin{appendices}

\section{Convergence rate of $\hat \sigma^2$}
\label{sec.cvr}

In this section we will give a proof of Theorem \ref{thm.misesthm}. To this end we first introduce some notation and then prove a Lemma in order to get uniform estimates of bias and variance of the single estimators $\hat s_{k,0}$.

\subsection{Preliminary Results and Notation}
\label{sec.preandnot}
Proofs of the upper bounds are based on a decomposition of $\Delta Y^k$. In this subsection we present some further notation. Let $\sigma_k(t):=\sigma(t)f_k(t)$ and $\tau_k(t):=\tau(t)f_k(t),$ $t\in [0,1].$ Let throughout the following for the Sobolev {\it s}-ellipsoids in Definition \ref{def.ssobo} for $\sigma^2$ the constants being $l=\sigma_{\min}$ and $u=\sigma_{\max}$ and for $\tau^2$, $l=\tau_{\min}$, $u=\tau_{\max}$. We define
\begin{align}
	\phi_{n}
	&:= \sup_{\sigma^2\in \Theta^b_s\left(\alpha,Q\right)}
	\max_{i=1,\ldots,n-1}\sup_{\xi \in \left[i/n,\left(i+1\right)/n\right]}
	\left|\sigma\left(\xi\right)-\sigma\left(\frac in\right)\right|,
	\notag \\
	\bar \phi_{n}
	&:=  \sup_{\tau^2\in \Theta^b_s\left(\beta,\bar Q\right)}
		\max_{k \leq n^{1/4}}
		\max_{i=1,\ldots,n}\left|\Delta_i \tau_k\right|.
	\label{eq.maxdis}
\end{align}
In order to do the proofs for model (\ref{eq.mod2}) and model (\ref{eq.mod}) simultaneously, we first define the more general process
\begin{eqnarray}
	V_{k,l}
	:=X_{1,k}+X_{2,k}+Z_{1,k,l}
	+Z_{2,k}, \quad l=1,2,
	\label{eq.deltay}
\end{eqnarray}
where $X_{1,k}, X_{2,k}, Z_{1,k,l}$ and $Z_{2,k}$ are $n-1$ dimensional random vectors with components
\begin{eqnarray*}
	\left(X_{1,k}\right)_i
	&:=& \sigma_k\left( i/n\right)\Delta_i W_{i/n} , \\
	\left(X_{2,k}\right)_i
	&:=& \tau_k\left( i/n\right)\Delta_i \epsilon_{i,n}, \\
	\left(Z_{1,k,1}\right)_i
	&:=& f_k\left(i/n\right)\int_{i/n}^{\left(i+1\right)/n}
		\left(\sigma\left(s\right)-
		\sigma\left(\frac in\right)\right)dW_s, \\
	\left(Z_{1,k,2}\right)_i
	&:=& f_k\left(i/n\right)\left(\Delta_i \sigma\right) W_{\left(i+1\right)/n}, \\
	\left(Z_{2,k}\right)_i
	&:=&
	f_k\left(i/n\right)\left(\Delta_i \tau\right)\epsilon_{i+1,n},
	\quad \quad i=1,\ldots,n-1.
\end{eqnarray*}
Obviously, $\Delta Y^k=V_{k,1}$ and $\Delta Y^k=V_{k,2}$ if model (\ref{eq.mod2}) and (\ref{eq.mod}) holds, respectively. Define the generalized estimators $\hat t_{k,0,l}:=V_{k,l}^tDJ_n^\tau D^tV_{k,l}$ and $\hat s_{k,0,l}:=V_{k,l}^tDJ_n D^tV_{k,l}-7\pi^2\hat t_{k,0,l}/3$. Further there exists a decomposition with $C_{1,k,l},C_{2,k}\in \mathbb{M}_{n-1,n}$ such that
\begin{eqnarray}
	V_{k,l} =C_{1,k,l}\xi+C_{2,k}\epsilon,
	\label{eq.deltayk}
\end{eqnarray}
where $\epsilon=\left(\epsilon_{1,n},\ldots,\epsilon_{n,n}\right)^t$ and $\xi=\xi_n$ is standard $n$-variate normal, $\epsilon, \xi$ independent and $C_{1,k,l}\xi=X_{1,k}+Z_{1,k,l}$, $C_{2,k}\epsilon=X_{2,k}+Z_{2,k}$. Now, let
\begin{eqnarray}
	s_{k,p}:=\int_0^1 \sigma^2_k(x)\cos(p\pi x)dx 
	,\quad 
	t_{k,p}:=\int_0^1 \tau^2_k(x)\cos(p\pi x)dx
	\label{def.tkl}
	\label{def.skl}
\end{eqnarray}
be the scaled $p$-th Fourier coefficients of the cosine series of $\sigma_k^2$ and $\tau_k^2$, respectively. Define the sums $A(\sigma_k^2,r)$ by
\begin{align*}
 A\left(\sigma^2_k,r\right)
	=
	\begin{cases}
	 s_{k,0}+2\sum_{m=1}^\infty s_{k,2nm} \quad &\text{for} \ r \equiv 0 \bmod 2n, \\
	\sum_{m=0}^\infty s_{k,2nm+n} \quad &\text{for} \ r \equiv n \bmod 2n, \\
	\sum_{q \equiv \pm r \bmod 2n, \ q\geq 0} s_{k,q} \quad &\text{for} \ r \not \equiv 0 \bmod n,
	\end{cases}
\end{align*}
and analogously $A(\tau_k^2,r)$ with $s_{k,p}$ replaced by $t_{k,p}$. Some properties of these variables are given in Lemma \ref{lem.approx} and Lemma \ref{lem.bound}. 

Further define
\begin{eqnarray}
	\Sigma_k:=
	\left(
	\begin{array}{ccc}
	\sigma_k(1/n) & & \\
	& \ddots & \\
	& & \sigma_k(1-1/n)
	\end{array}
	\right).
	\label{eq.sigmak}
\end{eqnarray}
We put $\Cum_4\left(\epsilon\right):=\Cum_4\left(\epsilon_{1,n}\right)$ for the fourth cumulant of $\epsilon_{1,n}$. If $X,Y$ are independent random vectors, we write $X \perp Y$.

\subsection{Proofs for Estimation of $\sigma^2$}
\label{subsec.estofsigma}

\begin{lem2}
\label{lem.sklem}
Let $\hat s_{k,0}$ be defined as in (\ref{eq.defsk}). Further assume $\beta>1$, $Q, \bar Q>0,$ $0<\sigma_{\min}\leq \sigma_{\max}<\infty,$ $0<\tau_{\min}\leq \tau_{\max}<\infty$ and $k=k_n \in \mathbb{N}$.
\begin{itemize}
\item[(i)] Assume model (\ref{eq.mod2}), $\alpha>1/2$. Then it holds
\begin{align}
	\sup_{\sigma^2\in \Theta^b_s\left(\alpha,Q\right), \ \tau^2\in \Theta^b_s\left(\beta,\bar Q\right)}
	&\max_{k\leq n^{1/4}}\left|\E\left(\hat s_{k,0}\right)-s_{k,0}\right|
	=O\left(n^{-1/4}+
	n^{1-\beta}+n^{1/2-\alpha}\right),
	\label{eq.skE2}\\
	\sup_{\sigma^2\in \Theta^b_s\left(\alpha,Q\right), \ \tau^2\in \Theta^b_s\left(\beta,\bar Q\right)}
	&\max_{k\leq n^{1/4}}\Var\left(\hat s_{k,0}\right)
	=O\left(n^{-1/2}+n^{4-4\beta}\right).
	\label{eq.skV2}
\end{align}
\item[(ii)] Assume model (\ref{eq.mod}), $\alpha>5/4$. Then it holds
\begin{align}
	\sup_{\sigma^2\in \Theta^b_s\left(\alpha,Q\right), \ \tau^2\in \Theta^b_s\left(\beta,\bar Q\right)}
	&\max_{k\leq n^{1/4}}\left|\E\left(\hat s_{k,0}\right)-s_{k,0}\right|
	=
	O\left(n^{1-\beta}+n^{5/2-2\alpha}+n^{-1/4}\right),
	\label{eq.skE}\\
	\sup_{\sigma^2\in \Theta^b_s\left(\alpha,Q\right), \ \tau^2\in \Theta^b_s\left(\beta,\bar Q\right)}
	&\max_{k\leq n^{1/4}}\Var\left(\hat s_{k,0}\right)
	=O\left(n^{-1/2}+n^{4-4\beta}\right).
	\label{eq.skV}
\end{align}
\end{itemize}
\end{lem2}

\begin{proof}
The proof mainly uses the generalized estimators as introduced in Section \ref{sec.preandnot}. 
It is clear that for two centered random vectors $P$ and $Q$ 
\begin{eqnarray*}
	\left\langle P, Q\right\rangle_\sigma :=
	\E\left(P^tDJ_n D Q\right)
\end{eqnarray*}
defines a semi-inner product and by Lemma \ref{lem.n}, $P \perp Q \Rightarrow \left\langle P, Q\right\rangle_\sigma =0$.
Hence
\begin{eqnarray}
	\E\hat s_{k,0,l}
	&=&
	\left\langle X_{1,k}, X_{1,k} \right\rangle_\sigma 
	+ \left\langle X_{2,k}, X_{2,k} \right\rangle_\sigma 
	+ \left\langle Z_{1,k,l}, Z_{1,k,l} \right\rangle_\sigma 
	+ \left\langle Z_{2,k}, Z_{2,k} \right\rangle_\sigma \notag \\
	&& \quad 
	+2 \left\langle X_{1,k}, Z_{1,k,l} \right\rangle_\sigma 
	+2 \left\langle X_{2,k}, Z_{2,k} \right\rangle_\sigma
	-\frac{7\pi^2}3 
	\E\left(\hat t_{k,0,l}\right)
	.
	\label{eq.evalex}
\end{eqnarray}
Clearly with (iii) in Lemma \ref{lem.approx} and $r_n:=n^{-1/2}\left[n^{1/2}\right]$,
\begin{eqnarray*}
	\left\langle X_{1,k}, X_{1,k} \right\rangle_\sigma 
	&=& \frac 1n \Tr\left(\Sigma_k D J_n D\Sigma_k\right)
	=\frac 1n\Tr\left(J_nD\Sigma_k^2D\right) \\
	&=&
	n^{-1/2}\sum_{i=\left[n^{1/2}\right]+1}^{2\left[n^{1/2}\right]}
	\left(
	A\left(\sigma_k^2,0\right)
	-
	A\left(\sigma_k^2,2i\right)
	\right) \\
	&=& r_n A\left(\sigma_k^2,0\right)
	-n^{-1/2} \sum_{i=\left[n^{1/2}\right]+1}^{2\left[n^{1/2}\right]}
	A\left(\sigma_k^2,2i\right).
\end{eqnarray*}
Hence due to $r_n\leq 1$ and $\left|r_n-1\right|\leq n^{-1/2}$
\begin{eqnarray*}
	\left| \left\langle X_{1,k}, X_{1,k}\right\rangle_\sigma -s_{k,0}\right|
	\leq 2\sum_{m=n}^\infty \left|s_{k,m}\right|
	+\frac 2{\sqrt{n}}\sum_{i=0}^\infty  \left|s_{k,i}\right|,
\end{eqnarray*}
and  with Lemma \ref{lem.bound}
\begin{align*}
	\sup_{\sigma^2\in \Theta_s^b\left(\alpha, Q\right)}
	\max_{k\leq n^{1/4}} 
	\left| \left\langle X_{1,k}, X_{1,k}\right\rangle_\sigma -s_{k,0}\right|=O\left(n^{1/2-\alpha}\right).
\end{align*}
Next we will bound $\left\langle X_{2,k},X_{2,k}\right\rangle_{\sigma}.$ In order to do this let $T_k \in \mathbb{D}_{n-1}$ with entries $\left(T_k\right)_{i,j}=\tau_k\left(i/n\right)\delta_{i,j}.$ Further we define $\tilde T_k \in \mathbb{M}_{n-1}$
\begin{align}
	\left(\tilde T_k\right)_{i,j}
	=
	\begin{cases}
	\left(\Delta_i \tau_k\right)^2 \quad &\text{for} \quad i=j-1, \\
	\left(\Delta_j \tau_k\right)^2 \quad &\text{for} \quad i=j+1, \\
	0 \quad &\text{otherwise}.
	\end{cases}
	\label{eq.ttildedef}
\end{align}
Note the relation 
\begin{eqnarray}
	\Cov\left(X_{2,k}\right)=T_kKT_k=1/2 T_k^2K+1/2KT_k^2+1/2\tilde T_k.
	\label{eq.tktdec}
\end{eqnarray}
Using Lemma \ref{lem.p} yields
\begin{align}
	\left\langle X_{2,k}, X_{2,k} \right\rangle_\sigma
	&=\E\left(X_{2,k}^t T_k D J_n D X_{2,k}\right)
	=\Tr\left(D J_n D T_k KT_k\right) \notag  \\
	&= \frac 12 \Tr\left(J_n D T_k^2 K D \right)
	+ \frac 12 \Tr\left(J_n D K T_k^2 D \right)
	+ \frac 12 \Tr\left(J_n D \tilde T_k D\right) \notag \\
	&= \Tr\left(\Lambda  J_n D T_k^2 D\right)
	+ \frac 12 \Tr\left(J_n D \tilde T_k D\right),
	\label{eq.x2x2}
\end{align}
and further
\begin{align}
	\Tr\left(\Lambda  J_n D T_k^2 D\right)
	&= n^{1/2} \sum_{i=\left[n^{1/2}\right]+1}^{2\left[n^{1/2}\right]} \lambda_i \left(A\left(\tau^2_k,0\right)-A\left(\tau^2_k,2i\right)\right) \notag \\
	&= A\left(\tau^2_k,0\right) n^{1/2} \sum_{i=\left[n^{1/2}\right]+1}^{2\left[n^{1/2}\right]}\lambda_i
	-
	n^{1/2} \sum_{i=\left[n^{1/2}\right]+1}^{2\left[n^{1/2}\right]} \lambda_i
	A\left(\tau^2_k,2i\right).
	\label{eq.trx2x2s}
\end{align}
Because $\max_{i=\left[n^{1/2}\right]+1,\ldots,2\left[n^{1/2}\right]}\lambda_i =\lambda_{2\left[n^{1/2}\right]}\leq 4\pi^2n^{-1}$, it holds
\begin{align*}
	\left|
	\sqrt n \sum_{i=\left[n^{1/2}\right]+1}^{2\left[n^{1/2}\right]} \lambda_i
	A\left(\tau^2_k,2i\right)
	\right|
	&\leq
	n^{1/2} \sum_{i=\left[n^{1/2}\right]+1}^{2\left[n^{1/2}\right]} \lambda_i
	\sum_{q \equiv \pm 2i \bmod{ 2n }, \ q\geq 0}
	\left|t_{k,q}\right| \\
	&\leq 4\pi^2n^{-1/2}\sum_{i=0}^\infty \left|t_{k,i}\right|
	\leq 8 \pi^2n^{-1/2}\sum_{i=0}^\infty \left|t_{0,i}\right|.
\end{align*}
Therefore, (\ref{eq.trx2x2s}) can be written as
\begin{align*}
	&\left|
	\Tr\left(\Lambda  J_n D T_k^2 D\right)
	-t_{k,0} n^{1/2} \sum_{i=\left[n^{1/2}\right]+1}^{2\left[n^{1/2}\right]} \lambda_i
	\right| 
	\leq 8\pi^2 \sum_{m=n}^\infty \left|t_{k,m}\right| 
	+8 \pi^2n^{-1/2}\sum_{i=0}^\infty \left|t_{0,i}\right|.
\end{align*}
This gives by Lemma \ref{lem.sumeig} and Lemma \ref{lem.bound} 
\begin{align*}
	\sup_{\tau^2\in \Theta_s^b \left(\beta, \bar Q\right)}
	&\max_{k\leq n^{1/4}} \left|
	\Tr\left(\Lambda  J_n D T_k^2 D\right)
	-\frac{7\pi^2}3 t_{k,0}
	\right| =O\left(n^{-1/2}\right).
\end{align*}
Recall that $\Tr\left(J_n\right)=O\left(n\right)$. It follows $$\left|\Tr\left(J_n D\tilde T_k D\right)\right|\leq \Tr\left(J_n\right)\max_{i,j}\left|\left(D\tilde T_k D\right)_{i,j}\right| \leq 4 \Tr\left(J_n\right) \max_i\left(\Delta_i \tau_k\right)^2.$$ So,
\begin{align}
	\sup_{\tau^2\in \Theta_s^b \left(\beta, \bar Q\right)}
	&\max_{k\leq n^{1/4}} \ \Tr\left(J_n D\tilde T_k D\right)=O\left(n\bar \phi_{n}^2\right)
	\label{eq.t2}
\end{align}
and therefore
\begin{align*}
	\sup_{\tau^2\in \Theta_s^b \left(\beta, \bar Q\right)}
	  \max_{k\leq n^{1/4}} \left| 
	\left\langle X_{2,k}, X_{2,k} \right\rangle_\sigma
	- \frac{7\pi^2}3 t_{k,0} \right|
	=
	O\left(n^{-1/2}
	+n\bar \phi_{n}^2\right)
	.
\end{align*}
We bound the remaining terms of (\ref{eq.evalex}). Note
\begin{eqnarray*}
	\left\langle Z_{1,k,1}, Z_{1,k,1}\right\rangle_\sigma
	=\Tr\left(DJ_nD \Cov\left(Z_{1,k,1}\right)\right)
	\leq
	\lambda_1\left(\Cov\left(Z_{1,k,1}\right)\right)\Tr\left(J_n\right)
	\leq 2\phi_n^2
\end{eqnarray*}
implying
\begin{eqnarray*}
	\sup_{\sigma^2\in \Theta_s^b\left(\alpha, Q\right)}
	\max_{k\leq n^{1/4}}
	\left\langle Z_{1,k,1}, Z_{1,k,1}\right\rangle_\sigma
	= O\left(\phi_{n}^2\right).
	\label{eq.zik3zik3}
\end{eqnarray*}
In order to bound $\left\langle Z_{1,k,2}, Z_{1,k,2}\right\rangle_\sigma$ define
\begin{eqnarray}
	L:=\left(\frac{\left(i\wedge j\right)+1}n\right)_{i,j=1,\ldots,n-1}
	\label{eq.Kdef}
\end{eqnarray}
and $\Delta \Sigma_k \in \mathbb{D}_{n-1}$ by
\begin{align*}
 \left(\Delta \Sigma_k\right)_{i,j} 
	:= f_k\left(\frac in\right)\left(\Delta_i \sigma\right) \delta_{i,j}.
\end{align*}
We obtain
\begin{eqnarray}
	\left\langle Z_{1,k,2}, Z_{1,k,2}\right\rangle_\sigma
	= \Tr\left(D J_n D \left(\Delta \Sigma_k\right) L \left(\Delta \Sigma_k\right)\right)
	\leq  2n^{3/2}\phi_{n}^2
\end{eqnarray}
and hence
\begin{eqnarray}
	\sup_{\sigma^2\in \Theta_s^b\left(\alpha, Q\right)}
	\max_{k\leq n^{1/4}}
	\left\langle Z_{1,k,2}, Z_{1,k,2}\right\rangle_\sigma
	= O\left(n^{3/2}\phi_{n}^2\right).
	\label{eq.zik2zik2}
\end{eqnarray}
Similarly to (\ref{eq.zik3zik3}), $\left\langle Z_{2,k}, Z_{2,k}\right\rangle_\sigma\leq \bar \phi_{n}^2 n$
and thus
\begin{align}
	\sup_{\tau^2\in \Theta_s^b\left(\beta, \bar Q\right)}
	\max_{k\leq n^{1/4}} \left\langle Z_{2,k}, Z_{2,k}\right\rangle_\sigma=O\left(\bar \phi_{n}^2 n\right).
	\label{eq.z2klz2kl}
\end{align}
Note that
\begin{align*}
	\Cov\left(X_{1,k},Z_{1,k,2}\right)_{i,j}
	=\begin{cases}
		0 \quad \quad \quad  \quad \quad \quad \quad \ & \text{for} \quad j<i \\
		n^{-1}f_k^2\left(i/n\right)
		\sigma\left(i/n\right)\left(\Delta_j\sigma\right)
		\quad  & \text{for} \quad j\geq i
	 \end{cases}
	.
\end{align*}
Hence by Proposition \ref{prop.X1R1prop}, we obtain
\begin{align*}
	&\sup_{\sigma^2\in \Theta_s^b \left(\alpha, Q\right)}\max_{k\leq n^{1/4}}
	\left| \left\langle 
	X_{1,k},Z_{1,k,2} \right\rangle _\sigma \right|
	= O\left(n^{1/2}\log n \ \phi_{n}\right)
\end{align*}
Applying the CS-inequality gives 
\begin{align*}
	&\sup_{\sigma^2\in \Theta_s^b \left(\alpha, Q\right)}
	\max_{k\leq n^{1/4}}
	\left| \left\langle
	X_{1,k},Z_{1,k,1} \right\rangle _\sigma \right|
	=O\left(\phi_n\right), \\
	&\left|
	\left\langle 
	X_{2,k},Z_{2,k} \right\rangle _\sigma
	\right|
	\leq 
	\left\langle 
	X_{2,k},X_{2,k} \right\rangle _\sigma^{1/2}
	\left\langle 
	Z_{2,k},Z_{2,k} \right\rangle _\sigma^{1/2}.
\end{align*}
Using Proposition \ref{prop.prop21} this yields (\ref{eq.skE2}) and (\ref{eq.skE}). In order to give an upper bound for the variance of $\hat s_{k,0,l}$ note
\begin{eqnarray*}
	\Var\left(\hat s_{k,0,l}\right)
	\leq
	2\Var\left(V_{k,l}^tDJ_nD V_{k,l}\right)
	+2\frac{7^2\pi^4}9\Var\left(\hat t_{k,0,l}\right).
\end{eqnarray*}
Furthermore we have using (\ref{eq.deltayk}) and Lemma \ref{lem.p} (vi)
\begin{align*}
	V_{k,l}^tDJ_nD V_{k,l}
	&=
	\xi^tC_{1,k,l}^tDJ_nDC_{1,k,l}^t \xi
	+
	2\xi^tC_{1,k,l}^tDJ_nDC_{2,k} \epsilon
	+
	\epsilon^tC_{2,k}^tDJ_nDC_{2,k} \epsilon \\
	&\leq 2\xi^tC_{1,k,l}^tDJ_nDC_{1,k,l}^t \xi
	+ 2\epsilon^tC_{2,k}^tDJ_nDC_{2,k} \epsilon.
\end{align*}
Hence
\begin{align*}
	&\Var\left(V_{k,l}^tDJ_nD V_{k,l}\right) \leq 
	8\Var\left(
	\xi^tC_{1,k,l}^tDJ_nDC_{1,k,l}\xi\right)
	+
	8\Var\left(\epsilon^tC_{2,k}^tDJ_nDC_{2,k}\epsilon\right).
\end{align*}
Finally, we bound $\Var\left(\xi^tC_{1,k,l}^tDJ_nDC_{1,k,l}\xi\right)$ and $\Var\left(\epsilon^tC_{2,k}^tDJ_nDC_{2,k}\epsilon\right)$ in two steps, which will be denoted by $(a)$ and $(b)$.

\medskip

\noindent
(a) By Lemma \ref{lem.ABineq} (iii), we have
\begin{align}
	\Var&\left(
	\xi^tC_{1,k,l}^tDJ_nDC_{1,k,l}\xi\right)
	=
	2\left\|
	J_n^{1/2}D\Cov\left(X_{1,k}+Z_{1,k,l}\right)DJ_n^{1/2}\right\|_F^2 \notag \\
	& \quad  \leq
	8
	\left\|
	J_n^{1/2}D\left(\Cov\left(X_{1,k}\right)
	+\Cov\left(Z_{1,k,l}\right) \right)DJ_n^{1/2}\right\|_F^2 \notag \\
	&\quad \leq
	16
	\left\|
	J_n^{1/2}D\Cov\left(X_{1,k}\right)DJ_n^{1/2}\right\|_F^2
	+16
	\left\|
	J_n^{1/2}D\Cov\left(Z_{1,k,l}\right)DJ_n^{1/2}\right\|_F^2.
	\label{eq.varub}
\end{align}
Firstly,
\begin{align*}
	\left\|J_n^{1/2}D\Cov\left(Z_{1,k,1}\right)DJ_n^{1/2}\right\|_F^2
	&\leq
	\lambda_1^2\left(\Cov\left(Z_{1,k,1}\right)\right)
	\Tr\left(J_n^2\right)
	\leq 4n^{-1/2}\phi_n^4,\\
	\left\|J_n^{1/2}D\Cov\left(Z_{1,k,2}\right)DJ_n^{1/2}\right\|_F^2
	&\leq
	\lambda_1^2\left(
	DJ_nD
	\right)
	\Tr\left(
	\Cov\left(Z_{1,k,2}\right)^2
	\right) \leq
	4n\phi_{n}^4 \left\|L\right\|_F^2
	\leq 4n^3\phi_{n}^4,
\end{align*}
and also
\begin{eqnarray*}
	\left\|
	J_n^{1/2}D\Cov\left(X_{1,k}\right)DJ_n^{1/2}\right\|_F^2
	\leq
	\lambda_1^2\left(\Cov\left(X_{1,k}\right)\right)\Tr\left(J_n^2\right)\leq \sigma_{\max}n^{-1/2}.
\end{eqnarray*}
Therefore,
\begin{eqnarray*}
	\sup_{\sigma^2\in \Theta_s^b \left(\alpha, Q\right)}
	\max_{k\leq n^{1/4}} \ 
	\Var\left(
	\xi^tC_{1,k,l}^tDJ_nDC_{1,k,l}\xi\right)
	= O\left(n^{-1/2}\right).
\end{eqnarray*}

\medskip

\noindent
(b) Next, we see with the same arguments as in (\ref{eq.varub})
\begin{align*}
	\Var\left(\epsilon^tC_{2,k}^tDJ_nDC_{2,k}\epsilon\right)
	&\leq \left(2+\Cum_4\left(\epsilon\right)\right) 
	\left\|
	J_n^{1/2}D\Cov\left(X_{2,k}+Z_{2,k}\right)DJ_n^{1/2}\right\|_F^2 \\
	&\leq 
	8\left(2+\Cum_4\left(\epsilon\right)\right) \left\|
	J_n^{1/2}D\Cov\left(X_{2,k}\right)DJ_n^{1/2}\right\|_F^2 \\
	&\quad  +
	8\left(2+\Cum_4\left(\epsilon\right)\right)
	\left\|
	J_n^{1/2}D\Cov\left(Z_{2,k}\right)DJ_n^{1/2}\right\|_F^2.
\end{align*}
We obtain
\begin{eqnarray*}
	\left\|
	J_n^{1/2}D\Cov\left(Z_{2,k}
	\right)DJ_n^{1/2}\right\|_F^2
	\leq 4\bar \phi_{n}^4 \Tr\left(J_n^2\right)
	=4\bar \phi_{n}^4 n^{3/2}.
\end{eqnarray*}
From (\ref{eq.ttildedef}) follows now
\begin{eqnarray*}
	\left\|
	J_n^{1/2}D\Cov\left(X_{2,k}\right)DJ_n^{1/2}\right\|_F^2 
	\leq
	\frac 32 
	\left\|
	J_n^{1/2}DT_k^2KDJ_n^{1/2}\right\|_F^2
	+\frac 34
	\left\|
	J_n^{1/2}D\tilde T_kDJ_n^{1/2}\right\|_F^2.
\end{eqnarray*}
Let $I_{n-1}$ be the $n-1\times n-1$ identity matrix. Due to $\Lambda J_n \Lambda \leq I_n \lambda^2_{2\left[n^{1/2}\right]}n^{1/2}$ we have
\begin{align*}
	\left\|
	J_n^{1/2}DT_k^2KDJ_n^{1/2}\right\|_F^2
	&=
	\Tr\left( J_n^{1/2}DT_k^2 \Lambda J_n\Lambda T_k^2DJ_n^{1/2}\right) \\
	&\leq
	\lambda^2_{2\left[n^{1/2}\right]} n^{1/2}\Tr\left(J_n^{1/2}DT_k^4DJ_n^{1/2}\right) \\
	&\leq
	\lambda^2_{2\left[n^{1/2}\right]} n^{1/2} \tau_{\max}^2\Tr\left(J_n\right).
\end{align*}
Also
\begin{eqnarray*}
	\left\|
	J_n^{1/2}D\tilde T_kDJ_n^{1/2}\right\|_F^2
	\leq \lambda_1^2\left(J_n\right)
	\left\|\tilde T_k\right\|_F^2
	\leq 2n^2\bar \phi_{n}^4
\end{eqnarray*}
and therefore
\begin{eqnarray*}
	\sup_{\tau^2\in \Theta_s^b \left(\beta, \bar Q\right)}\max_{k\leq n^{1/4}}
	\ 
	\Var\left(
	\epsilon^tC_{2,k}^tDJ_nD C_{2,k}\epsilon
	\right)=
	O\left(n^{-1/2}+n^2\bar \phi_{n}^4\right).
\end{eqnarray*}

\medskip

\noindent
Combining (a) and (b) gives
\begin{align*}
	\sup_{\sigma^2\in \Theta^b_s\left(\alpha,Q\right), \ \tau^2\in \Theta^b_s\left(\beta,\bar Q\right)}
	&\max_{k\leq n^{1/4}} \ 
	\Var\left(
	V_{k,l}^tDJ_nDV_{k,l}
	\right)
	=O\left(n^{-1/2}+n^2\bar \phi_{n}^4\right),
\end{align*}
so (\ref{eq.skV2}) and (\ref{eq.skV}) follow with Lemma \ref{lem.tlem} and Propositon \ref{prop.prop21}.
\end{proof}

\begin{proof}[Proof of Theorem \ref{thm.misesthm}]
We decompose $$\MISE\left(\hat \sigma^2_N\right)=\int_0^1\Bias^2\left(\hat \sigma^2_N(t)\right)
dt+ \int_0^1 \Var\left(\hat \sigma^2_N(t)\right)dt.$$ We have that $\sigma^2(t)
=\sum_{i=0}^\infty \left\langle \psi_k,\sigma^2\right\rangle \psi_k\left(t\right),$ where $\left\langle \ . \ , \ . \ \right\rangle$ denotes the standard scalar product on $L^2[0,1]$. Let $\eta_{k,n,l}:=\E\left(\hat s_{k,0,l}\right)-s_{k,0}$. Then for $i\geq 1$, $\E\left(2\hat s_{i,0,l}-2\hat s_{0,0,l}\right)=2^{1/2}\left\langle \psi_i,\sigma^2\right\rangle+2\eta_{i,n,l}-2\eta_{0,n,l}$. Hence
\begin{eqnarray*}
	\int_0^1 \Bias^2\left(\hat \sigma^2(t)\right) dt
	&=& \eta_{0,n,l}^2
	+2\sum_{i=1}^N \left(\eta_{i,n,l}
		-\eta_{0,n,l}\right)^2
	+\sum_{i=N+1}^\infty \left\langle \psi_i,\sigma^2\right\rangle^2
\end{eqnarray*}
and we obtain
\begin{eqnarray*}
	\eta_{0,n,l}^2
	+2\sum_{i=1}^N \left(\eta_{i,n,l}
	-\eta_{0,n,l}\right)^2
	&\leq& \left(8N+1\right)\max_{i=0,\ldots,N} 
	\eta_{i,n,l}^2.
\end{eqnarray*}
Because $\sigma^2 \in \Theta_s\left(\alpha, Q \right)$ it  holds
\begin{align}
	\sum_{i=N+1}^\infty \left\langle\psi_i,\sigma^2\right\rangle^2
	&\leq \frac 1{\left(N+1\right)^{2\alpha}}\sum_{i=1}^ \infty
	i^{2\alpha}\left\langle\psi_i,\sigma^2\right\rangle^2
	\leq 2Q\left(N+1\right)^{-2\alpha}.
	\label{eq.sumfcoeff}
\end{align}
Therefore,
\begin{align*}
	&\sup_{\sigma^2\in \Theta^b_s\left(\alpha,Q\right), \ \tau^2\in \Theta^b_s\left(\beta,\bar Q\right)}
	\int_0^1 \Bias^2\left(\hat \sigma^2(t)\right) dt \\
	& \quad =O\left(N \sup_{\sigma^2\in \Theta^b_s\left(\alpha,Q\right), \  \tau^2\in \Theta^b_s\left(\beta,\bar Q\right)}
	\max_{i=0,\ldots,N} \eta_{i,n,l}^2+ N^{-2\alpha}\right).
\end{align*}
Assume model (\ref{eq.mod2}), then by Lemma \ref{lem.sklem} 
\begin{align*}
	\sup_{\sigma^2\in \Theta^b_s\left(\alpha,Q\right), \tau^2\in \Theta^b_s\left(\beta,\bar Q\right)}
	&\int_0^1 \Bias^2\left(\hat \sigma^2(t)\right) dt \\ &=O\left(Nn^{-1/2}+Nn^{2-2\beta}+Nn^{1-2\alpha}+N^{-2\alpha}\right).
\end{align*}
and for model (\ref{eq.mod}),
\begin{align*}
	\sup_{\sigma^2\in \Theta^b_s\left(\alpha,Q\right), \ \tau^2\in \Theta^b_s\left(\beta,\bar Q\right)}
	&\int_0^1 \Bias^2\left(\hat \sigma^2(t)\right) dt \\
	&= O\left(Nn^{2-2\beta}+Nn^{5-4\alpha}+Nn^{-1/2}+N^{-2\alpha}\right).
\end{align*}
For the variance note
\begin{align*}
	\int_0^1 \Var\left(\hat \sigma^2\left(t\right)\right) dt
	&= \Var\left(\hat s_{0,0,l}\right)
	+\frac 12\sum_{i=1}^N\Var\left(2\hat s_{i,0,l}-2\hat s_{0,0,l}\right) \\
	&\leq \left(4N+1\right)\Var\left(\hat s_{0,0,l}\right)
	+4\sum_{i=1}^N \Var\left(\hat s_{i,0,l}\right).
\end{align*}
Hence
\begin{align*}
	\sup_{\sigma^2\in \Theta^b_s\left(\alpha,Q\right), \ \tau^2\in \Theta^b_s\left(\beta,\bar Q\right)}
	&\int_0^1 \Var\left(\hat \sigma^2\left(t\right)\right) dt \\
	&=
	O\left(N
	\sup_{\sigma^2\in \Theta^b_s\left(\alpha,Q\right), \tau^2\in \Theta^b_s\left(\beta,\bar Q\right)}
	\max_{0\leq k\leq N}\Var\left(\hat s_{k,0,l}\right)\right).
\end{align*}
Using Lemma \ref{lem.sklem} yields the result.
\end{proof}

\section{Sobolev s-ellipsoids}
\label{sec.ass}

In this chapter we will shortly discuss the function space introduced in Section \ref{sec.techpre} and provide a theorem needed for the lower bound. First recall the classical definition of Sobolev ellipsoids (cf. Proposition 1.14 in \cite{tsyb}).
\begin{defi2}
\label{def.sobell}
Define
\begin{eqnarray*}
	a_{j,\alpha}=
	\begin{cases}
	j^\alpha, \quad \text{for even} \ j, \\
	\left(j-1\right)^\alpha, \quad \text{for odd} \ j
	\end{cases}.
\end{eqnarray*}
Let $\{\varphi_j\}_{j=1}^\infty$, $\phi_1(x):=1$, $\phi_{2j}(x):=\sqrt{2}\cos\left(2\pi jx\right),$ $\phi_{2j+1}(x):=\sqrt{2}\sin\left(2\pi jx\right)$ denote the trigonometric basis on $[0,1]$. Then we call the function space
\begin{align*}
	&\Theta\left(\alpha,C\right)
	:= 
	 \left\{
	f \in L^2[0,1]: \exists \left(\theta_n\right)_{n=1}^\infty, \ \text{s. t.} \ f(x)=\sum_{i=1}^\infty \theta_i\varphi_i\left(x\right),\sum_{i=1}^\infty a_{i,\alpha}^2\theta_i^2 \leq C
	\right\}
\end{align*}
a Sobolev ellipsoid.
\end{defi2}
Interesting characterizations arise if we put Sobolev {\it s}-ellipsoids into relation with Sobolev ellipsoids:
\begin{rem2}
Let $\mathcal{S}$ be the class of all symmetric functions $f\in L^2[0,1]$ such that $f(x)=f(1-x)$, $\forall x\in[0,1]$. Further let $\Theta(\alpha,C)$ be a Sobolev ellipsoid. Then a function belongs to $\Theta_s(\alpha,C)\cap \mathcal{S}$ if and only if it belongs to $\Theta(\alpha,C)\cap \mathcal{S}$.
\end{rem2}

\noindent
Let
\begin{align}
	\mathcal{W}(\alpha,\bar C):=&\mathcal{W}(\alpha,\bar C,[0,1])
	:=  \left\{f \in L^2[0,1]: f^{\left(l\right)}(0)=f^{\left(l\right)}(1)=0
	\right. \notag \\ & \left.\quad 
	\ \text{for} \ l \ \text{odd}, \ l< \alpha, \int_0^1\left(f^{\left(\alpha\right)}(x)\right)^2dx \leq \bar C\right\}.
	\label{eq.wdef}
\end{align}
For positive integer values of $\alpha$, we have the following equivalence.
\begin{thm2}
\label{thm.equiv}
Assume $\alpha \in \left\{1,2,\ldots\right\}$, $C>0$. Let $\bar C=2\pi^{2\alpha}C$. Then a function is in $\mathcal{W}(\alpha,\bar C)$ if and only if it is in $\Theta_s(\alpha,C)$.
\end{thm2}

\begin{proof}
First we show that if a function $f \in \mathcal{W}(\alpha,\bar C)$ then also $f\in \Theta_s(\alpha,C)$. Let $\tilde f$ be defined on $[-1,1]$ by
\begin{align*}
	\tilde f(x):=
	\begin{cases}
	f(x) \quad \text{for} \quad & x\in [0,1] \\
	f(-x) \quad \text{for} \quad & x\in [-1,0] \\
	\end{cases}
	.
\end{align*}
Note that $\tilde f$ is an $\alpha$-times differentiable function,  $\tilde f^{\left(l\right)}$ is even if $l$ is even and $\tilde f^{\left(l\right)}$ is odd if $l$ is odd. Let 
\begin{align*}
	s_k(j)=
	\begin{cases}
	\int_{-1}^1\tilde f^{\left(j\right)}(x)dx \quad &\text{for} \quad k=0 \\
	\int_{-1}^1 \tilde f^{\left(j\right)}(x)\cos(k\pi x)dx 
		\quad &\text{for} \quad k\geq1, j \ \text{even} \\
	\int_{-1}^1 \tilde f^{\left(j\right)}(x)\sin(k\pi x)dx 
		\quad &\text{for} \quad k\geq1, j \ \text{odd}
	\end{cases}
	.
\end{align*}
It holds for $j\geq 1$
\begin{eqnarray*}
	s_0(j)=\int_{-1}^1 \tilde f^{\left(j\right)}(x)dx
	= \tilde f^{\left(j-1\right)}(1)-\tilde f^{\left(j-1\right)}(-1)=0.
\end{eqnarray*}
Hence we have the Parseval type equality
\begin{eqnarray}
	\int_0^1 \left(f^{\left(\alpha\right)}(x)\right)^2 dx
	= \frac 12\sum_{k=1}^\infty s_k^2(\alpha).
	\label{eq.par}
\end{eqnarray}
Further for $k\geq 1$, $j$ even, it follows by partial integration
\begin{align*}
	s_k(j)&=\int_{-1}^1 \tilde f^{\left(j\right)}(x)\cos(k\pi x)dx \\
	&=\left. \tilde f^{\left(j-1\right)}(x)\cos(k\pi x)\right|_{-1}^1
	+k\pi\int_{-1}^1\tilde f^{\left(j-1\right)}(x)\sin(k\pi x)dx 
	= k\pi s_k(j-1)
\end{align*}
and for $k\geq1$ and $j$ odd
\begin{align*}
	s_k(j)
	&=\int_{-1}^1 \tilde f^{\left(j\right)}(x)\sin(k\pi x)dx \\
	&=\left. \tilde f^{\left(j-1\right)}(x)\sin(k\pi x)\right|_{-1}^1
	-k\pi\int_{-1}^1\tilde f^{\left(j-1\right)}(x)\cos(k\pi x)dx
	= -k\pi s_k(j-1).
\end{align*}
With $\theta_k=\int_0^1 f(x)\cos(k\pi x)dx$ it follows for $k\geq 1$ $s_k^2(\alpha)=k^{2\alpha}\pi^{2\alpha}s_k^2=4k^{2\alpha}\pi^{2\alpha}\theta_k^2$. Combining this result with (\ref{eq.par}) yields
\begin{eqnarray*}
	\int_0^1 \left(f^{\left(\alpha\right)}(x)\right)^2 dx
	=2\pi^{2\alpha}\sum_{k=1}^\infty k^{2\alpha}\theta_k^2
\end{eqnarray*}
and hence proves the first part of the theorem. The other direction follows in a straightforward way by differentiation and is thus omitted.
\end{proof}

\section*{Supplementary Material}
{\bf Supplement: Proofs for upper bound of $\hat \tau^2_N$ and further technicalities}
\newline
(http://www.stochastik.math.uni-goettingen.de/munk) In the supplementary material we provide a proof for Theorem \ref{thm.misetthm} and summarize results from linear algebra and matrix theory needed for the proofs.

\section*{Acknowledgments}
We are grateful to T. Tony Cai, Marc Hoffmann, Mark Podolskij and Ingo Witt for helpful comments and discussions.

\bibliographystyle{plain}       
\bibliography{refsPart1}           

\newpage
\begin{figure}
	\begin{center}
	\includegraphics[width=10cm]{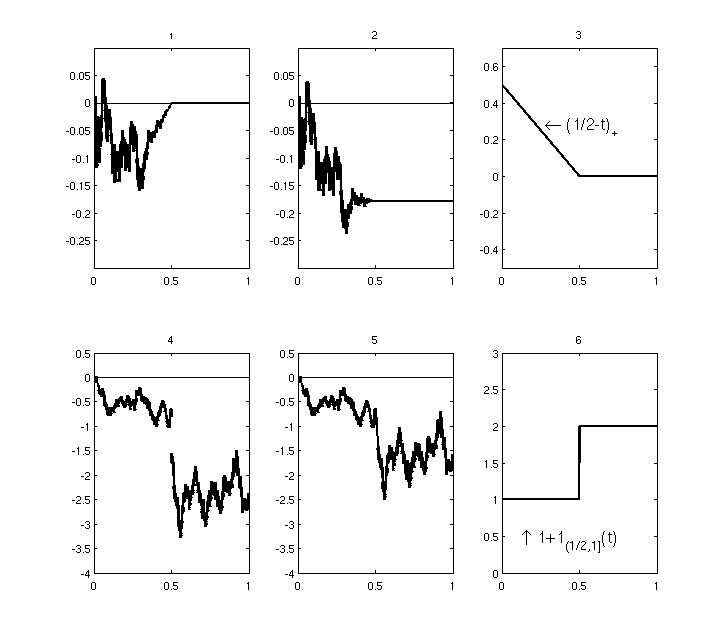}
	\end{center}
	\caption{Plots $1$ and $2$ display paths of sBM and tBM corresponding to  $\sigma(t)=\left(1/2-t\right)_+$ (Plot $3$). Analogously, Plots $4$ and $5$ show paths of sBM and tBM with $\sigma\left(t\right)=1+\mathbb{I}_{\left(\right. 1/2,1\left. \right]}\left(t\right)$ (Plot $6$). For Plots $1$ and $2$ as well as Plots $4$ and $5$ we took the same realization $\left(W_t\right)_{t\in \left[0,1\right]}$ of the underlying Brownian motion. The first two plots show the different scaling behavior: sBM$=0$ and tBM$=\int_0^{1/2}\sigma\left(s\right)dW_s$ for $t>1/2$. On the other hand we see by Plots $4$ and $5$ that a jump induces a random shift, i.e. sBM$=$tBM for $t\leq 1/2$ and tBM$+W_{1/2}=$sBM for $t>1/2$.}
	\label{fig:pdiff}
\end{figure} 

\begin{figure}
	\begin{center}
	\includegraphics[scale=0.5]{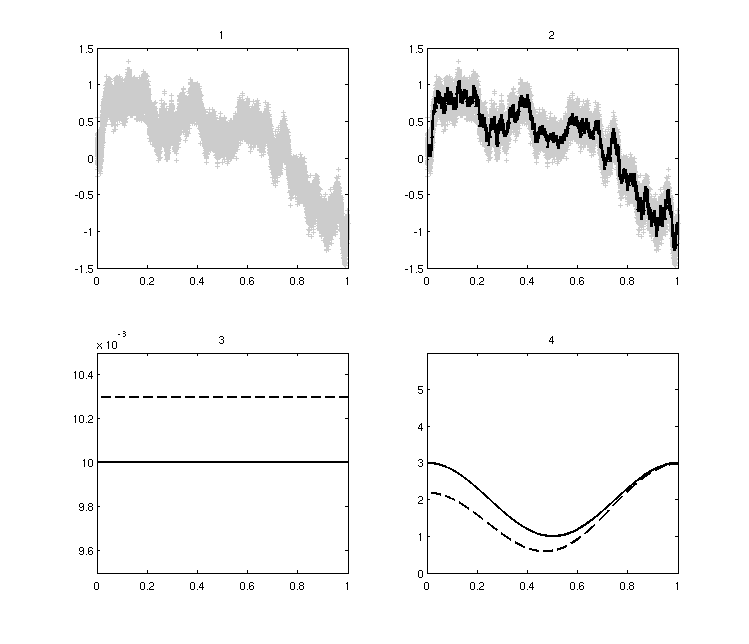}
	\end{center}
	\caption{$n=25000$ data points from model (\ref{eq.mod2}), $\epsilon_{i,n}\sim \mathcal{N}\left(0,1\right)$, i.i.d., $\tau=0.1$, $\sigma(t)=\left(2+\cos\left(2\pi t\right)\right)^{1/2}.$ Plot $1$ shows the data. Additionally to the data, we plotted the path of the tBM in Plot $2$. The reconstruction of $\tau^2$ and $\sigma^2$ (dashed lines) as well as the true function (solid lines) are given in Plot $3$ and $4$, respectively. The threshold parameters were selected as $N^*=1$ for estimation of $\tau^2$ and $N^*=3$ for estimation of $\sigma^2.$}
	\label{fig:s1p}
\end{figure} 

\begin{figure}
	\begin{center}
	\includegraphics[scale=0.5]{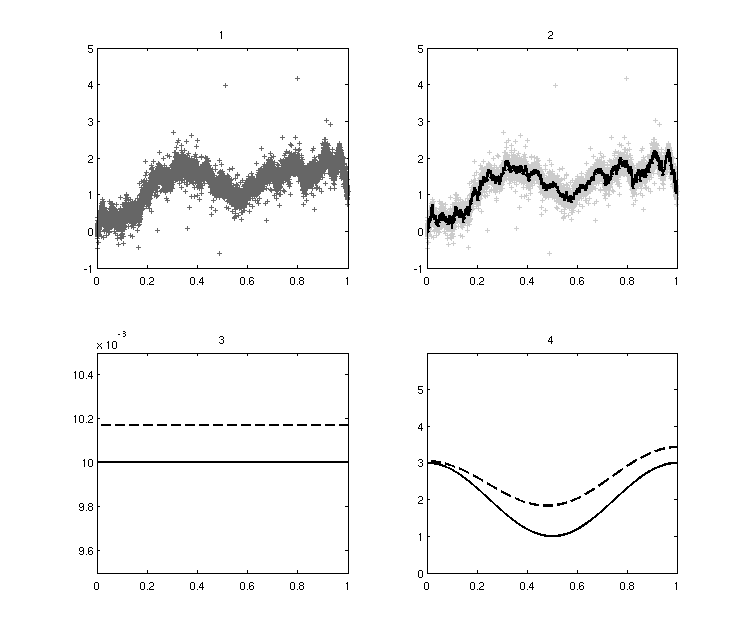}
	\end{center}
	\caption{(Heavy-tailed microstructure noise) \  As Figure \ref{fig:s1p} but instead of Gaussian errors we assumed that the noise follows a normalized Student's t-distribution with $3$ degrees of freedom. We observe that performance of $\hat \tau^2$  and $\hat \sigma^2$ is quite robust to heavy-tailed noise. The threshold parameters $N^*$ were selected as $1$ and $3$ for estimation of $\tau^2$ and $\sigma^2$, respectively.}
	\label{fig:s1t2p}
\end{figure}

\begin{figure}
	\begin{center}
	\includegraphics[scale=0.5]{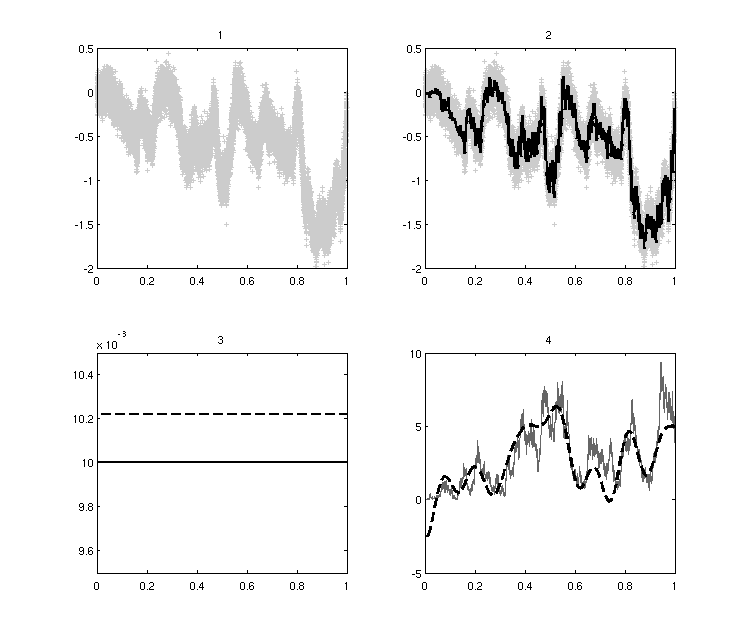}
	\end{center}
	\caption{(Low-smoothness) As Figure \ref{fig:s1p} but we chose $\sigma(t)=3\left|\tilde W_t\right|$, where $\left(\tilde W_t\right)_{t\in \left[0,1\right]}$ denotes a Brownian motion independent of the noise and the Brownian motion in (\ref{eq.mod2}). The estimator returns a smoothed version of the path. The threshold parameters $N^*$ were selected as $1$ and $17$ for estimation of $\tau^2$ and $\sigma^2$, respectively.}
	\label{fig:s3p}
\end{figure}

\begin{figure}
	\begin{center}
	\includegraphics[scale=0.5]{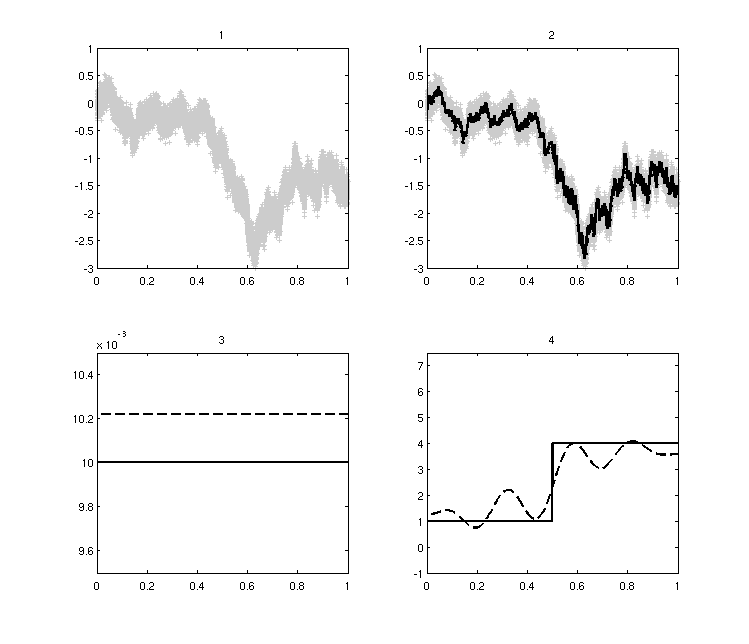}
	\end{center}
	\caption{(Jump function) As Figure \ref{fig:s1p} but we chose $\sigma(t)=1+\mathbb{I}_{\left(\right. 1/2,1\left. \right]}\left(t\right).$ The Gibbs phenomenon is clearly visible. The threshold parameters $N^*$ were selected as $1$ and $10$ for estimation of $\tau^2$ and $\sigma^2$, respectively.}
	\label{fig:s5p}
\end{figure}

\newevenside

{\LARGE \begin{center}
Supplementary Material: Nonparametric Estimation of the Volatility Function in a High-Frequency Model corrupted by Noise
\end{center}}

\medskip

{\large \begin{center}
Axel Munk$^*$ and Johannes Schmidt-Hieber
\end{center}}
\vspace{0.1cm}
{\begin{center}
{\em Institut f\"ur Mathematische Stochastik, Universit\"at G\"ottingen,} \\ {\em Goldschmidtstr. 7, 37077 G\"ottingen} \\ 
{\small {\em Email:} \texttt{munk@math.uni-goettingen.de}, \texttt{schmidth@math.uni-goettingen.de}}
\end{center}}

\bigskip



\begin{abstract}
This note provides proofs and supplementary technicalities for the paper "Nonparametric Estimation of the Volatility Function in a High-Frequency Model corrupted by Noise". In particular a proof on the rate of convergence of the estimator $\hat \tau^2_N$ is given.
\end{abstract}

\medskip

\noindent\textbf{AMS 2000 Subject Classification:}
Primary 62M09, 62M10; secondary 62G08.

\noindent\textbf{Keywords:\/} Brownian motion; Variance estimation; Minimax rate; Microstructure noise; Sobolev Embedding.

\section{Convergence Rate of $\hat \tau^2$}
\label{sec.proofs}
\subsection{Preliminary Results and Notation}

First we recall some notation. Let $\sigma_k(i/n):=\sigma(i/n)f_k(i/n)$ and $\tau_k(i/n):=\tau(i/n)f_k(i/n)$. Let throughout the following for the Sobolev {\it s}-ellipsoids in Definition \ref{def.ssobo} for $\sigma^2$ the constants being $l=\sigma_{\min}$ and $u=\sigma_{\max}$ and for $\tau^2$, $l=\tau_{\min}$, $u=\tau_{\max}$. We define $K_n:=n^{1/2}/\log n \ $ and
\begin{align*}
	\phi_{n}
	&:= \sup_{\sigma^2\in \Theta^b_s\left(\alpha,Q\right)}
	\max_{i=1,\ldots,n-1}\sup_{\xi \in \left[i/n,i+1/n\right]}
	\left|\sigma\left(\xi\right)-\sigma\left(\frac in\right)\right|,
	\notag \\
	\bar \phi_{n,1/2}
	&:=  \sup_{\tau^2\in \Theta^b_s\left(\beta,\bar Q\right)}
		\max_{k \leq K_n}
		\max_{i=1,\ldots,n}
	\sup_{\xi_i\in \left[\left(i-1\right)/n,i/n\right]}
	\left|\tau_k\left(\xi_i\right)-\tau_k\left(\frac in\right)\right|.
\end{align*}

\begin{prop2}
\label{prop.prop21}
Assume $\alpha,\beta>1/2$. It holds for any $\delta>0$,
\begin{align*}
	\phi_{n}&=O\left(n^{1/2-\alpha}+n^{\delta-1}\right) \\
	\bar\phi_n &= O\left(n^{1/2-\beta}+n^{-3/4}\right) \\
	\bar\phi_{n,1/2} &= O\left(n^{1/2-\beta}+n^{-1/2}\log^{-1}n\right).
\end{align*}
\end{prop2}

\begin{proof}
We only prove the third equality the other two can be deduced similarly. Note that for $\tau^2\in \Theta^b\left(\beta,Q\right)$,
\begin{align*}
	\left|\tau_k\left(\xi_i\right)-\tau_k\left(\frac in\right)\right|
	&\leq \sqrt{2}\left|\tau\left(\xi_i\right)-\tau\left(\frac in\right)\right|
	+\tau_{\max}^{1/2}k\pi/\left(2n\right) \\
	&\leq \frac1{\sqrt{2}\tau_{\min}}
	\left|\tau^2\left(\xi_i\right)-\tau^2\left(\frac in\right)\right|
	+\tau_{\max}^{1/2}k\pi/\left(2n\right).
\end{align*}
Taking supremum and applying Lemma \ref{lem.sobolem} gives the result.
\end{proof}

\subsection{Proofs for Estimation of $\tau^2$}

\begin{lem2}
\label{lem.tlem}
Let $\hat t_{k,0}$ be defined as in (\ref{eq.deftk}). Further assume $\alpha, \beta>1/2$, $Q, \bar Q>0$, $0<\sigma_{\min}\leq \sigma_{\max}<\infty,$ $0<\tau_{\min}\leq \tau_{\max}<\infty$ and $k=k_n \in \mathbb{N}$. Assume model (\ref{eq.mod2}). Then
\begin{align}
	\sup_{\sigma^2\in \Theta^b_s\left(\alpha,Q\right), \tau^2\in \Theta^b_s\left(\beta,\bar Q\right)}
	&\max_{k \leq K_n}\left|\E\hat t_{k,0}-t_{k,0}\right| 
	= 
	O\left(n^{1/2-\beta}\log^{1/2} n\right)+o\left(n^{-1/2}\right),
	\label{eq.Et2}\\
	\sup_{\sigma^2\in \Theta^b_s\left(\alpha,Q\right), \ \tau^2\in \Theta^b_s\left(\beta,\bar Q\right)}
	&\max_{k \leq K_n} \Var\left(\hat t_{k,0}\right)
	=
	O\left(n^{-1}\right). \label{eq.Vt2}
\end{align}
If further $\epsilon$ is $n$-variate standard normal, then
\begin{align}
	\sup_{\sigma^2\in \Theta^b_s\left(\alpha,Q\right), \ \tau^2\in \Theta^b_s\left(\beta,\bar Q\right)}
	&\max_{k \leq K_n}\left|\Var\left(\hat t_{k,0}\right)
	-\frac 2n \int_0^1\tau_k^4(x) dx\right| \notag \\
	&=
	O\left(n^{-1}\log^{-1} n\right), \label{eq.Vt2normal}\\
	n^{1/2}\left(\hat t_{k,0}-t_{k,0}\right)
	&\stackrel{L}{\rightarrow} \mathcal{N}\left(0,2\int_0^1\tau_k^4(x)dx \right)
	\quad \text{for} \ \beta>1, \ k\leq K_n.
	\label{eq.ant2}
\end{align}
Assume model (\ref{eq.mod}). Then it holds
\begin{align}
	&\sup_{\sigma^2\in \Theta^b_s\left(\alpha,Q\right), \ \tau^2\in \Theta^b_s\left(\beta,\bar Q\right)}
	\max_{k \leq K_n}\left|\E\hat t_{k,0}-t_{k,0}\right| \notag \\
	& \quad\quad\quad\quad \quad = 
	O\left(n^{1/2-\beta}\log^{1/2}n+n^{1-2\alpha}\log ^2n\right)+o\left(n^{-1/2}\right), \label{eq.Et}\\
	&\sup_{\sigma^2\in \Theta^b_s\left(\alpha,Q\right), \ \tau^2\in \Theta^b_s\left(\beta,\bar Q\right)}
	\max_{k \leq K_n}\Var\left(\hat t_{k,0}\right)
	=
	O\left(n^{2-4\alpha}\log^4 n+n^{-1}\right).
	\label{eq.Vt}
\end{align}
If further $\epsilon$ is $n$-variate standard normal, then
\begin{align}
	\sup_{\sigma^2\in \Theta^b_s\left(\alpha,Q\right), \tau^2\in \Theta^b_s\left(\beta,\bar Q\right)}
	&\max_{k \leq K_n}\left|\Var\left(\hat t_{k,0}\right)
	-\frac 2n \int_0^1\tau_k^4(x) dx \right| \notag \\ 
	&\quad\quad\quad =
	O\left(n^{2-4\alpha}\log^4 n+n^{-1}\log^{-1} n\right),
	\label{eq.Vtnormal} \\
	n^{1/2}\left(\hat t_{k,0}-t_{k,0}\right)
	&\stackrel{L}{\rightarrow} \mathcal{N}\left(0,2\int_0^1\tau_k^4(x)dx \right)
	\quad \text{for} \ \alpha>3/4, \ \beta>1, \ k\leq K_n.
	\label{eq.ant}
\end{align}
\end{lem2}

\begin{proof}
Again we work with the generalized estimators as introduced in Section \ref{sec.preandnot}. As in the proof of Lemma \ref{lem.sklem} we introduce for two centered random vectors $P$ and $Q$ a semi-inner product defined by $\left\langle P, Q \right\rangle_\tau :=  \E\left(P^tDJ^{\tau}_nD^tQ\right)$  and obtain
\begin{align}
	\E\hat t_{k,0,l}
	&=
	\left\langle X_{1,k}, X_{1,k} \right\rangle_\tau 
	+ \left\langle X_{2,k}, X_{2,k} \right\rangle_\tau 
	+ \left\langle Z_{1,k,l}, Z_{1,k,l} \right\rangle_\tau 
	+ \left\langle Z_{2,k}, Z_{2,k} \right\rangle_\tau \notag \\
	& \quad
	+2 \left\langle X_{1,k}, Z_{1,k,l} \right\rangle_\tau 
	+2 \left\langle X_{2,k}, Z_{2,k} \right\rangle_\tau.
	\label{eq.eex}
\end{align}
First we bound $\left \langle X_{2,k}, X_{2,k} \right\rangle_\tau$, which will turn out to be the leading term. Similar to (\ref{eq.x2x2}) we have
\begin{eqnarray*}
	\left\langle X_{2,k}, X_{2,k} \right\rangle_\tau
	= \Tr\left(\Lambda  J^\tau_n D^t T_k^2 D\right)
	+ \frac 12 \Tr\left(J^\tau_n D^t \tilde T_k D\right),
\end{eqnarray*}
and due to
\begin{eqnarray}
	\Tr\left(J^\tau_n \right)=O\left(\log n \right)
	\label{eq.tr2}
\end{eqnarray}
the same argument as for (\ref{eq.t2}) gives
\begin{eqnarray*}
	\sup_{\tau^2\in \Theta_s^b\left(\beta, \bar Q\right)}
	\max_{k \leq K_n}\Tr\left(J_n^\tau D\tilde T_k D\right)
	=O\left(\bar \phi_{n,1/2}^2 \log n \right).
\end{eqnarray*}
Hence this is a negligible term. Using Lemma \ref{lem.approx} (iii)
\begin{align*}
	\Tr\left(\Lambda  J^\tau_n D^t T_k^2 D\right) 
	&= \left(n-n/\log n \right)^{-1}
	\sum_{i=\left[n/\log n\right]}^{n-1}
	\left(A\left(\tau^2_k,0\right)-
	A\left(\tau^2_k,2i\right)\right) \\
	&= \bar r_n A\left(\tau^2_k,0\right)
	-\left(n-n/\log n \right)^{-1}\sum_{i=\left[n/\log n\right]}^{n-1}
	A\left(\tau^2_k,2i\right),
\end{align*}
where $\bar r_n=\left(n-\left[n/\log n\right]\right)/\left(n-n/\log n \right).$ Note $1-\bar r_n \leq 1/\left(n-n/\log n \right).$ By Lemma \ref{lem.bound}
\begin{align*}
	& \sup_{\tau^2\in \Theta_s^b\left(\beta, \bar Q\right)} \max_{k \leq K_n}
	\left| \Tr\left(\Lambda  J^\tau_n D^t T_k^2 D\right) -t_{k,0}\right| \\
	& \quad \leq 
	\sup_{\tau^2\in \Theta_s^b\left(\beta, \bar Q\right)} \max_{k \leq K_n}
	\left(
	\left(1-\bar r_n\right)\left|t_{k,0}\right|+\sum_{m=n}^\infty \left|t_{k,m}\right|
	+ 2\left(n-n/\log n \right)^{-1} \sum_{i=0}^\infty \left|t_{k,i}\right|\right) \\
	& \quad \leq 2C_{\beta,Q}n^{1/2-\beta}+6\left(n-n/\log n \right)^{-1}
	\sup_{\tau^2\in \Theta_s^b\left(\beta, \bar Q\right)}
	\sum_{i=0}^\infty \left|t_{0,i}\right|= O\left(n^{-1}+n^{1/2-\beta}\right).
\end{align*}
This shows that 
\begin{align*}
	\sup_{\tau^2\in \Theta_s^b\left(\beta, \bar Q\right)}
	\max_{k \leq K_n}\left|\left\langle X_{2,k}, X_{2,k} \right\rangle_\tau-t_{0,k}\right| = O\left(n^{-1}+\bar \phi_{n,1/2}^2 \log n  + n^{1/2-\beta}\right).
\end{align*}
The remaining part of the proofs of (\ref{eq.Et}) and (\ref{eq.Et2}) is concerned with bounding the other expressions in (\ref{eq.eex}). Applying Lemma \ref{lem.p}, we obtain 
\begin{align*}
	\left\langle X_{1,k}, X_{1,k} \right\rangle_\tau 
	= 
	\frac 1n \Tr\left(D
	J^\tau_n D^t
	\Sigma_k^2
	\right)
	\leq 2\sigma_{\max}
	\frac 1n \Tr\left(J_n^\tau\right)
\end{align*}
implying
\begin{align*}
	\sup_{\sigma^2\in \Theta_s^b\left(\alpha, Q\right)}
	\max_{k \leq K_n}
	\left\langle X_{1,k}, X_{1,k} \right\rangle_\tau =O\left(n^{-1}\log n \right).
\end{align*}
We obtain with Lemma \ref{lem.eigbound} in the same way as in (\ref{eq.zik3zik3}), (\ref{eq.zik2zik2}) and (\ref{eq.z2klz2kl})
\begin{align*}
	&\sup_{\sigma^2\in \Theta_s^b \left(\alpha, Q\right)}
	\max_{k \leq K_n} \ 
	\left\langle Z_{1,k,1}, Z_{1,k,1} \right\rangle_\tau
	=O\left(n^{-1}\log n \ \phi_n^2\right), \\
	&\sup_{\sigma^2\in \Theta_s^b \left(\alpha, Q\right)}
	\max_{k \leq K_n}
	\left\langle Z_{1,k,2}, Z_{1,k,2} \right\rangle_\tau
	= O\left(\log^2 n \ \phi_{n}^2\right), \\
	&\sup_{\tau^2\in \Theta_s^b \left(\beta, \bar Q\right)}
	\max_{k \leq K_n}
	\left\langle Z_{2,k}, Z_{2,k} \right\rangle_\tau=O\left(\bar \phi_{n,1/2}^2 \log n\right).
\end{align*}
From the Cauchy-Schwarz inequality follows that 
\begin{align*}
	\left | \left\langle X_{1,k}, Z_{1,k,l} \right\rangle_\tau \right|
	&\leq \left\langle X_{1,k}, X_{1,k} \right\rangle_\tau ^{1/2}
	\left\langle Z_{1,k,l}, Z_{1,k,l} \right\rangle_\tau ^{1/2} \\
	&\leq
	\left\langle X_{1,k}, X_{1,k} \right\rangle_\tau
	+
	\left\langle Z_{1,k,l}, Z_{1,k,l} \right\rangle_\tau,
\end{align*}
\begin{align*}
	\left | \left\langle X_{2,k},Z_{2,k} \right\rangle_\tau \right|
	&\leq \left\langle X_{2,k}, X_{2,k} \right\rangle_\tau ^{1/2}
	\left\langle Z_{2,k}, Z_{2,k} \right\rangle_\tau ^{1/2}
	.
\end{align*}
This yields
\begin{align*}
	& \sup_{\sigma^2\in \Theta^b_s\left(\alpha,Q\right), \ \tau^2\in \Theta^b_s\left(\beta,\bar Q\right)}
	\max_{k \leq K_n}
	\left|E\hat t_{k,0}-t_{k,0}\right| \\
	& \quad \quad =
	\begin{cases}
 	O\left(n^{-1}\log n  +n^{1/2-\beta}
	+
	\bar \phi_{n,1/2} \log^{1/2} n\right)\quad &\text{for} \ l=1, \\
	O\left(
	n^{-1}\log n +n^{1/2-\beta}
	+
	\bar \phi_{n,1/2} \log^{1/2} n
	+
	\phi_{n}^2\log^2 n
	\right)
	 \quad &\text{for} \ l=2,
	\end{cases}
\end{align*}
and therefore (\ref{eq.Et}) and (\ref{eq.Et2}) holds by Proposition \ref{prop.prop21}. In order to calculate the covariance we use the decomposition (\ref{eq.deltayk}).
We have
\begin{eqnarray*}
	\hat t_{k,0,l}
	= \xi^t C_{1,k,l}^t DJ_n^\tau D 
		C_{1,k,l}\xi
	+ 2 \xi^t C_{1,k,l}^t DJ_n^\tau D C_{2,k} \epsilon
	+ \epsilon^t C_{2,k}^t DJ_n^\tau D C_{2,k} \epsilon.
\end{eqnarray*}
Using the CS-inequality repeatedly, we can write
\begin{align}
	&\left|\Var\left(\hat t_{k,0,l}\right)
	- \Var\left(\epsilon^t C_{2,k}^t DJ_n^\tau D C_{2,k} \epsilon\right)\right|
	\label{eq.var} \\ 
	&\quad\quad
	\leq \left( \Var^{1/2}\left(\xi^t C_{1,k,l}^t DJ_n^\tau D C_{1,k,l}\xi \right)	
	+
	2\Var^{1/2}\left( 
	\xi^t C_{1,k,l}^t DJ_n^\tau D C_{2,k} \epsilon
	\right)\right)^2 \notag \\ 
	& \quad\quad\quad \ 
	+ \left( \Var^{1/2}\left(\xi^t C_{1,k,l}^t DJ_n^\tau D C_{1,k,l} \xi \right)	
	+
	2\Var^{1/2}\left( 
	\xi^t C_{1,k,l}^t DJ_n^\tau D C_{2,k} \epsilon
	\right)\right) \notag \\
	& \quad\quad\quad \    \cdot 2\Var^{1/2}\left(\epsilon^t C_{2,k}^t DJ_n^\tau D C_{2,k} \epsilon\right) \notag
\end{align}
We subdivide the remaining part of the proofs of (\ref{eq.Vt}) and (\ref{eq.Vt2}) into three steps (a), (b) and (c), where we calculate  $\Var\left(\epsilon^t C_{2,k}^t DJ_n^\tau D C_{2,k} \epsilon\right)$, $\Var\left(\xi^t C_{1,k,l}^t DJ_n^\tau D C_{1,k,l}\xi \right)$ and $\Var\left( \xi^t C_{1,k,l}^t DJ_n^\tau D C_{2,k} \epsilon\right)$, respectively.

\medskip

\noindent
{\em (a)} Let $\TrSq(A):=\sum_{i=1}^n A_{i,i}^2$ for $A\in \mathbb{M}_n.$ Then by Lemma \ref{lem.n} it follows
\begin{align*}
	&\Var\left(\epsilon^t C_{2,k}^t DJ_n^\tau D C_{2,k} \epsilon\right) \\ &\quad\quad=2\left\|C_{2,k}^tDJ_n^{\tau}DC_{2,k}\right\|_F^2
	+\Cum_4\left(\epsilon\right)\TrSq\left(C_{2,k}^tDJ_n^{\tau}DC_{2,k}\right) \\
	&\quad\quad\leq \left(2+\Cum_4\left(\epsilon\right)\right)
	\left\|\left(J_n^\tau\right)^{1/2}D\Cov\left(X_{2,k}+Z_{2,k}\right)
	D\left(J_n^\tau\right)^{1/2}\right\|_F^2,
\end{align*}
where equality holds if $\Cum_4\left(\epsilon\right)=0$. By Proposition \ref{prop.eqprop} we see that 
\begin{align*}
	\sup_{\tau^2\in \Theta_s^b\left(\beta, \bar Q\right)}\max_{k \leq K_n}
	&\left|\Var\left(\epsilon^t C_{2,k}^t DJ_n^\tau D C_{2,k} \epsilon\right)
	-\frac2n \int_0^1 \tau_k^4(x)dx\right| \\
	& = O\left(\Cum_4\left(\epsilon\right)n^{-1}+n^{-1}\log^{-1}n\right).
\end{align*}

\bigskip

\noindent
(b) In this part of the proof we will bound $\Var\left(\xi^t C_{1,k,l}^t DJ_n^\tau D C_{1,k,l}\xi \right)$. Similar to part (a) in the proof of Lemma \ref{lem.sklem} it holds
\begin{align*}
	\Var\left(\xi^tC_{1,k,l}^tDJ_n^\tau DC_{1,k,l}\xi\right) 
	&\leq 16\lambda_1^2\left(J_n^\tau\right)
 	\left(
 	\left\|\Cov\left(X_{1,k}\right)\right\|_F^2
	+\left\|\Cov\left(Z_{1,k,l}\right)\right\|_F^2\right) 
	\\
	&\leq 
	\begin{cases}
	2^8n^{-2}\log^4 n\left(n^{-1}\sigma_{\max}^2+4n^{-1}\phi_{n}^4\right),
	\quad &l=1, \\
	2^8n^{-2}\log^4 n\left(n^{-1}\sigma_{\max}^2+4n^{2}\phi_n^4\right),
	\quad & l=2,
	\end{cases}
\end{align*}
where we used Lemma \ref{lem.eigbound} in the second inequality. Hence we get
\begin{align}
	\sup_{\sigma^2\in \Theta_s^b\left(\alpha,Q\right)}
	\max_{k \leq K_n}
	&\Var\left(\xi^tC_{1,k,l}^tDJ_n^\tau DC_{1,k,l}\xi\right)
	= 
	\begin{cases}
	O\left(n^{-3}\log^4 n\right), &l=1, \\
	O\left(
	\log^4 n
	\left(\phi_{n}^4 +n^{-3}\right)
	\right),  &l=2.
	\end{cases}
	\label{eq.var11}
\end{align}

\medskip

\noindent
(c) Using Lemma \ref{lem.n} (ii)
\begin{align*}
	&\Var\left(\xi^tC_{1,k,l}^tDJ_n^\tau DC_{2,k}\epsilon\right) \leq
	\frac 1{\sqrt{2}}\Var^{1/2}\left(
	\xi^tC_{1,k,l}^tDJ_n^\tau DC_{1,k,l}\xi
	\right)
	\left\|C_{2,k}^tDJ_n^\tau DC_{2,k}\right\|_F
\end{align*}
and hence
\begin{align*}
	\sup_{\sigma^2\in \Theta^b_s\left(\alpha,Q\right), \ \tau^2\in \Theta^b_s\left(\beta,\bar Q\right)}
	&\max_{k \leq K_n}
	\Var\left(\xi^tC_{1,k,l}^tDJ_n^\tau DC_{2,k}\epsilon\right)
	\\
	& = 
	\begin{cases}
	O\left(n^{-2}\log^2 n\right), \ & l=1, \\
	O\left(\log^2 n
	\left(\phi_{n}^2+n^{-3/2}\right)\right)
	O\left(n^{-1/2}\right), \ & l=2.
	\end{cases}
\end{align*}
Combining (a), (b) and (c) in (\ref{eq.var}) yields
\begin{align}
	&\sup_{\sigma^2\in \Theta^b_s\left(\alpha,Q\right), \ \tau^2\in \Theta^b_s\left(\beta,\bar Q\right)}
	\max_{k \leq K_n}
	\left|\Var\left(\hat t_{k,0}\right)
	- \frac 2n \int_0^1 \tau_k^4(x)
	\right| \notag \\
	& \ \ = O\left(\frac{\Cum_4\left(\epsilon\right)}n+\frac 1{n\log n}\right)+
	\begin{cases}
	O\left(\phi_{n}^4\log^4 n
	+\phi_{n}n^{-3/4}\log n 
	\right), \  & l=1, \\
	0, \  & l=2,
	\end{cases}
	\label{eq.varall}
\end{align}
and hence (\ref{eq.Vt2}), (\ref{eq.Vt2normal}), (\ref{eq.Vt}) and (\ref{eq.Vtnormal}) follow using Proposition \ref{prop.prop21}.

\medskip

Finally we will show the asymptotic normality (\ref{eq.ant}) and (\ref{eq.ant2}). Because of the decomposition (\ref{eq.deltayk}), we have
\begin{eqnarray*}
	\hat t_{k,0,l}
	=\xi^tC_{1,k,l}^tDJ_n^\tau DC_{1,k,l}\xi
	+2\xi^tC_{1,k,l}^tDJ_n^\tau DC_{2,k}\epsilon
	+\epsilon^tC_{2,k}^tDJ_n^\tau C_{2,k}\epsilon.
\end{eqnarray*}
As proved above $n^{1/2}\left(\xi^tC_{1,k,l}^tDJ_n^\tau DC_{1,k,l}\xi
	+2\xi^tC_{1,k,l}^tDJ_n^\tau DC_{2,k}\epsilon\right)\stackrel{P}{\rightarrow} 0$ for $\beta>1, \alpha>3/4$ if $l=1$ and $\beta>1$ if $l=2$. Hence by Slutzky's Lemma it suffices to show that
\begin{eqnarray*}
	n^{1/2}\left(\epsilon^tC_ {2,k}^t DJ_n^\tau DC_{2,k}\epsilon
	-\E\left(\epsilon^tC_ {2,k}^t DJ_n^\tau DC_{2,k}\epsilon\right)\right)
	\stackrel{L}{\rightarrow} \mathcal{N}\left(0,2\int_0^1\tau_k^4(x)dx\right).
\end{eqnarray*}
In order to apply Theorem \ref{thm.wclt}, it remains to show
\begin{eqnarray*}
	n^{1/2}\lambda_1\left(C_{2,k}^tDJ_n^\tau DC_{2,k}\right)
	\rightarrow 0.
\end{eqnarray*}
Using Corollary \ref{cor.eig}, we see that
\begin{align*}
	n^{1/2}&\lambda_1\left(C_{2,k}^tDJ_n^\tau DC_{2,k}\right)
	\leq 4n^{-1/2}\log^2 n\lambda_1\left(\Cov\left(X_{2,k}+Z_{2,k}\right)\right) \\
	&\quad \leq 
	8n^{-1/2}\log ^2 n \ \lambda_1\left(\Cov\left(X_{2,k}\right)\right)
	+8n^{-1/2}\log^2 n\lambda_1\left(\Cov\left(Z_{2,k}\right)\right)\\
	&\quad \leq 8n^{-1/2}\log^2 n\sup_{t\in\left[0,1\right]}\tau_k^2(t)\max_i\lambda_i
	+
	8n^{-1/2}\log^2 n\phi_{n,1/2}^2
	=o\left(1\right),
\end{align*}
which yields the last statement of the lemma.
\end{proof}

\begin{proof}[Proof of Theorem \ref{thm.misetthm}]
The proof is close to the one of Theorem \ref{thm.misesthm}. We obtain
\begin{align*}
	&\sup_{\sigma^2\in \Theta^b_s\left(\alpha,Q\right), \ \tau^2\in \Theta^b_s\left(\beta,\bar Q\right)}
	\int_0^1\Bias\left(\hat \tau_N^2(t)\right)dt \\
	&\quad =
	O\left(N \sup_{\sigma^2\in \Theta^b_s\left(\alpha,Q\right), \ \tau^2\in \Theta^b_s\left(\beta,\bar Q \right)}
	\max_{0\leq k\leq N}
	\left|\E\left(\hat t_{k,0,l}\right)
	-t_{k,0}\right|^2+N^{-2\beta}\right), \\
	&\sup_{\sigma^2\in \Theta^b_s\left(\alpha,Q \right), \ \tau^2\in \Theta^b_s\left(\beta,\bar Q \right)}
	\int_0^1\Var\left(\hat \tau_N^2(t)\right)dt  \\
	&=
	O\left(N
	\sup_{\sigma^2\in \Theta^b_s\left(\alpha,Q \right), \ \tau^2\in \Theta^b_s\left(\beta,\bar Q \right)}
	\max_{0\leq k\leq N}\Var\left(
	\hat t_{k,0,l}\right)\right).
\end{align*}
\end{proof}

\section{Technical Results}
\label{sec.apptechres}
\begin{prop2}
\label{prop.X1R1prop}
Let $A\in \mathbb{M}_{n-1}$. Then
\begin{align*}
	\Tr\left(J_nDAD\right)
	\leq \left(n+5n^{3/2}+8n^{3/2}\left(1+\log n \right)\right) \max_{i,j}\left|\left(A\right)_{i,j}\right|.
\end{align*}
\end{prop2}

\begin{proof}
Write $A=\left(a_{i,j}\right)_{i,j=1,\ldots,n-1}.$ Note that
\begin{eqnarray*}
	\left(DAD\right)_{i,j}
	=
	\frac 2{n}
	\sum_{p,q=1}^{n-1}
	\sin\left(\frac{ip\pi}n\right)\sin\left(\frac{qj\pi}n\right)
	a_{p,q}.
\end{eqnarray*}
For $i=j$ we have further
\begin{align}
	\left(DAD\right)_{i,i}
	&=
	\frac 1{n}
	\sum_{p,q=1}^{n-1}
	a_{p,q} \cos\left(\left(p-q\right)\frac {i\pi}n\right) 
	+
	\frac 1{n}
	\sum_{p,q=1}^{n-1}
	a_{p,q} \cos\left(\left(p+q\right)\frac {i\pi}n\right).
	\label{eq.traceexplicit}
\end{align}
In order to bound the r.h.s. we need bounds for 
\begin{align*}
	&\left|\sum_{i=\left[n^{1/2}\right]+1}^{2\left[n^{1/2}\right]}
	\cos\left(r\frac{i\pi}n\right) \right|
	\leq
	\left| \Dir_{2\left[n^{1/2}\right]}\left(r\pi/n\right)
	- \Dir_{\left[n^{1/2}\right]}\left(r\pi/n\right)\right| \\
	&\quad \quad\leq
	\left|
	\frac {\sin\left(\left(2\left[n^{1/2}\right]+1/2\right)r\pi/n\right)}
	{2\sin\left(r\pi/\left(2n\right)\right)}
	\right|
	+
	\left|
	\frac {\sin\left(\left(\left[n^{1/2}\right]+1/2\right)r\pi/n\right)}
	{2\sin\left(r\pi/\left(2n\right)\right)}
	\right| \\
	&\quad\quad \leq \left|
	\frac 1
	{\sin\left(r\pi/\left(2n\right)\right)}
	\right|.
\end{align*}
Let $B_1:=\left\{1,
\ldots,n\right\}$ and $B_2:= \left\{n+1,\ldots,2n-2\right\}$. Then 
\begin{align*}
	\left|\sum_{i=\left[n^{1/2}\right]+1}^{2\left[n^{1/2}\right]}
	\cos\left(r\frac{i\pi}n\right) \right|&\leq
	\begin{cases}
	 n^{1/2} \ \  &\text{for} \ \   r=0, \\
	4n/r \ \ &\text{for} \ \  r\in B_1, \\
	n/\left(2n-r\right) \ \ &\text{for} \ \ r\in B_2.
	\end{cases}
\end{align*}
Therefore, we can bound the first term of the r.h.s. of (\ref{eq.traceexplicit}) by
\begin{align*}
	&\left|\sum_{i=\left[n^{1/2}\right]+1}^{2\left[n^{1/2}\right]} \frac 1{n}
	\sum_{p,q=1}^{n-1}
	a_{p,q} \cos\left(\left(p-q\right)\frac {i\pi}n\right) \right| 
	\\
	&\quad\quad\leq
	\frac 1{n}\sum_{p,q=1}^{n-1}
	\left|a_{p,q}\right|
	\left|\sum_{i=\left[n^{1/2}\right]+1}^{2\left[n^{1/2}\right]} \cos\left(\left(p-q\right)\frac{i\pi}n\right)\right| \\
	&\quad\quad\leq
	n^{-1}\max_{p,q=1,\ldots,n-1} \left|a_{p,q}\right|
	\left(n^{3/2}+2\sum_{\underset{q-p\in B_1}{p,q=1}}^{n-1}
	\frac{4n}{q-p}\right)
\end{align*}
and the second term by
\begin{eqnarray*}
	&&\left|\sum_{i=\left[n^{1/2}\right]+1}^{2\left[n^{1/2}\right]} \frac 1{n}
	\sum_{p,q=1}^{n-1}
	a_{p,q} \cos\left(\left(p+q\right)\frac {i\pi}n\right) \right| \\
	&& \quad \leq
	n^{-1}\max_{p,q=1,\ldots,n-1} \left|a_{p,q}\right|
	\left(
	\sum_{\underset{p+q\in B_1}{p,q=1}}^{n-1}
	\frac{4n}{p+q}
	+ \sum_{\underset{p+q\in B_2}{p,q=1}}^{n-1}
	\frac{n}{2n-\left(p+q\right)}
	\right)
 	 \\
 	&& \quad
	\leq 5n\max_{p,q=1,\ldots,n-1} \left|a_{p,q}\right|.
\end{eqnarray*}
Due to
\begin{eqnarray*}
	\sum_{\underset{q-p\in B_1}{p,q=1}}^{n-1}
	\frac{1}{q-p}
	\leq n\sum_{r=1}^n \frac 1r \leq n\left(1+\log n \right)
\end{eqnarray*}
and
\begin{eqnarray*}
	\Tr\left(J_nDAD\right)
	&=& \sqrt{n}\sum_{i=\left[n^{1/2}\right]+1}^{2\left[n^{1/2}\right]} \left(DAD\right)_{i,i} \\
	&\leq& \left(n+5n^{3/2}+8n^{3/2}\left(1+\log n \right)\right)
	\max_{p,q=1,\ldots,n-1} \left|a_{p,q}\right|
\end{eqnarray*}
we obtain the result.
\end{proof}

\begin{prop2}
\label{prop.eqprop}
It holds
\begin{align*}
	\sup_{\tau^2\in \Theta_s\left(\beta,Q\right)} \max_{k\leq n^{1/2}}
	&\left|
	\left\|\left(J_n^\tau\right)^{1/2}D\Cov\left(X_{2,k}+Z_{2,k}\right)D\left(J_n^\tau\right)^{1/2}\right\|_F^2
	- 
	\frac 2n \int_0^1\tau_k^4(x)dx\right| \\
	&=O\left(n^{-1}\log^{-1} n\right).
\end{align*}
\end{prop2}

\begin{proof}
We obtain with (\ref{eq.tktdec}) $\Cov\left(X_{2,k}+Z_{2,k}\right)=1/2 T_k^2K+1/2KT_k^2+S_{k}$, where $S_{k}:= 1/2\tilde T_{k}+ \Cov\left(X_{2,k},Z_{2,k}\right)+\Cov\left(Z_{2,k},X_{2,k}\right)+\Cov\left(Z_{2,k}\right).$ Application of the triangle inequality gives
\begin{align}
	&\frac 12\left\|\left(J_n^\tau\right)^{1/2}D\left(T_k^2K+KT_k^2\right)D\left(J_n^\tau\right)^{1/2}\right\|_F 
	-
	\left\|\left(J_n^\tau\right)^{1/2}DS_{k}D\left(J_n^\tau\right)^{1/2}\right\|_F \notag \\
	& \quad\leq \left\|\left(J_n^\tau\right)^{1/2}D\Cov\left(X_{2,k}+Z_{2,k}\right)D\left(J_n^\tau\right)^{1/2}\right\|_F \notag \\
	& \quad \leq
	\frac 12\left\|\left(J_n^\tau\right)^{1/2}D\left(T_k^2K+KT_k^2\right)D\left(J_n^\tau\right)^{1/2}\right\|_F 
	+
	\left\|\left(J_n^\tau\right)^{1/2}DS_{k}D\left(J_n^\tau\right)^{1/2}\right\|_F.
	\label{eq.normineqs}
\end{align}
Note that because of Lemma \ref{lem.froblem} (iii) it holds
\begin{align}
	\Tr\left(\left(\left(J_n^\tau\right)^{1/2}DT_k^2KD\left(J_n^\tau\right)^{1/2}\right)^2 \right) &\leq
	\frac 14\left\|\left(J_n^\tau\right)^{1/2}D\left(T_k^2K+KT_k^2\right)D\left(J_n^\tau\right)^{1/2}\right\|_F^2 \notag \\
	&\leq \left\|\left(J_n^\tau\right)^{1/2}DT_k^2KD
	\left(J_n^\tau\right)^{1/2}\right\|_F^2.
	\label{eq.frobineq}
\end{align}
Now we will bound 
\begin{align*}
	\Tr\left(\left(\left(J_n^\tau\right)^{1/2}DT_k^2KD\left(J_n^\tau\right)^{1/2}\right)^2 \right)
	&= \Tr\left(\left[\left(J_n^\tau \Lambda\right)^{1/2}DT_k^2D\left(\Lambda J_n^\tau\right)^{1/2}\right]^2\right) \\
	&= \sum_{i=1}^{n-1} \lambda_i^2\left(DT_k^2D\Lambda J_n^\tau\right)
\end{align*}
from below. We obtain with Lemma \ref{lem.p}
\begin{align*}
	&\lambda_i\left(DT_k^2D\Lambda J_n^\tau\right) \geq
	\begin{cases}
	\lambda_{n-\left[n/\log n\right]}\left(\Lambda J_n^\tau\right)
	\lambda_{\left[n/\log n\right] -1 +i}\left(DT_k^2D\right),
	&i\leq n-\left[n/\log n\right],
	\\
	0,  &i> n-\left[n/\log n\right].
	\end{cases}
\end{align*}
Denote by $\tau_{k,\left(i\right)}$ the $i$-th largest component of the vector $$\left(\tau_k\left(1/n\right),\ldots,\tau_k\left(1-1/n\right)\right).$$ Then
\begin{align}
	\Tr\left(\left(\left(J_n^\tau\right)^{1/2}\right.\right. & \left.\left. DT_k^2KD\left(J_n^\tau\right)^{1/2}\right)^2 \right)
	=
	\sum_{i=1}^{n-1}\lambda_i^2\left(DT_k^2D\Lambda J_n^\tau\right) \notag \\
	&\geq \sum_{i=1}^{n-\left[n/\log n\right]}
	\left(n-n/\log n \right)^{-2}
	\tau_{k,\left(\left[n/\log n\right] -1 +i\right)}^4 \notag \\
	&\geq \left(n-n/\log n \right)^{-2}
	\sum_{i=1}^{n-1} \tau_k^4 \left(\frac in\right) 
	- \tau_{\max}^2 \frac n{\log n}\left(n-n/\log n \right)^{-2}.
	\label{eq.lbtau}
\end{align}
Next we will derive an upper bound for the r.h.s. of (\ref{eq.frobineq}). Let analogously to the Definition (\ref{eq.ttildedef}) $\bar T_{k}$ be a tridiagonal matrix with entries
\begin{eqnarray*}
	\left(\bar T_{k}\right)_{i,j}:=
	\begin{cases}
	\left(\Delta_i \tau_k^2\right)^2 
	\quad &\text{for} \quad i= j-1, \\
	\left(\Delta_{j} \tau_k^2\right)^2 
	\quad &\text{for} \quad i= j+1, \\
	0 \quad &\text{otherwise}.
	\end{cases}
\end{eqnarray*}
Note that $\max_i \left|\Delta_i \tau_k^2\right|\leq 2\tau_{\max}^{1/2}\bar \phi_{n,1/2}.$ It is easy to check that $T_k^2KT_k^2=1/2T_k^4K+1/2KT_k^4+1/2\bar T_{k}$ holds. Clearly, $J_n^\tau\leq \left(n-n/\log n \right)^{-1}\Lambda^{-1}$, and therefore we have for the upper bound in (\ref{eq.frobineq})
\begin{align*}
	&\left\|\left(J_n^\tau\right)^{1/2}DT_k^2KD
	\left(J_n^\tau\right)^{1/2}\right\|_F^2
	\leq 
	\left(n-n/\log n \right)^{-1}\left\|\left(J_n^\tau\right)^{1/2}DT_k^2KD
	\Lambda^{-1/2}\right\|_F^2 \notag \\
	& \quad \quad \leq
	\left(n-n/\log n \right)^{-1}
	\Tr\left(\left(J_n^\tau\right)^{1/2}DT_k^2KT_k^2D\left(J_n^\tau\right)^{1/2}\right) \notag \\
	& \quad \quad \leq
	\left(n-n/\log n \right)^{-1}
	\Tr\left(T_k^4KDJ_n^\tau D\right)
	+\frac 12 \left(n-n/\log n \right)^{-1}
	\Tr\left(D\bar T_k DJ_n^\tau\right) \notag  \\
	& \quad \quad \leq 
	\left(n-n/\log n \right)^{-2}
	\Tr\left(T_k^4\right)
	+2\left(n-n/\log n \right)^{-1}\max_{i,j=1,\ldots,n-1}
	\left|\bar T_k\right|_{i,j}\Tr\left(J_n^\tau\right),
\end{align*}
where we used in the last inequality an argument as for (\ref{eq.t2}). Combining this with (\ref{eq.lbtau}) and Proposition \ref{prop.prop21} yields
\begin{align}
	&\sup_{\tau^2\in \Theta_s^b\left(\beta, \bar Q\right)} \max_{k\leq n^{1/2}}
	\left|
	\left\|\left(J_n^\tau\right)^{1/2}D\left(T_k^2K+KT_k^2\right)D\left(J_n^\tau\right)^{1/2}\right\|_F^2
	- 
	\frac 2n \int_0^1\tau_k^4(x)dx\right| \notag \\
	&\quad\quad \quad \quad  =O\left(n^{-1}\bar \phi_{n,1/2}+n^{-1}\log n \bar \phi_{n,1/2}^2
	+n^{-1}\log^{-1} n\right).
	\label{eq.maintermo}
\end{align}
Now we will bound the remainder term in (\ref{eq.normineqs}). Using Lemma \ref{lem.eigbound} gives
\begin{eqnarray*}
	\left\|\left(J_n^\tau\right)^{1/2}DS_{k}D\left(J_n^\tau\right)^{1/2}\right\|_F^2
	&\leq& \lambda_1^2\left(J_n^\tau\right)\left\|S_{k}\right\|_F^2 \\
	&\leq&
	16n^{-2}\log^4 n \left(\left\|\tilde T_{k}\right\|_F^2
	+4\left\|\Cov\left(Z_{2,k}\right)\right\|_F^2
	\right. \\
	&& \left. \quad\quad+8\left\|\Cov\left(X_{2,k},Z_{2,k}\right)\right\|_F^2\right).
\end{eqnarray*}
Because $\Cov\left(X_{2,k},Z_{2,k}\right)$ is tridiagonal it holds with Lemma \ref{lem.ABineq} (i)
\begin{align*}
	\left\|\Cov\left(X_{2,k},Z_{2,k}
		\right)\right\|_F^2
	&=\sum_{i,j=1}^{n-1}\left(\Cov\left(X_{2,k},Z_{2,k}
	\right)_{i,j}\right)^2 
	\leq 8n\tau_{\max}\phi_{n,1/2}^2
\end{align*}
and therefore
\begin{eqnarray*}
	&&\left\|\left(J_n^\tau\right)^{1/2}DS_{k}D\left(J_n^\tau\right)^{1/2}\right\|_F^2 \\
	&&\quad\quad \leq 16n^{-2}\log^4 n
	\left(2n \bar \phi_{n,1/2}^4
	+16n \bar \phi_{n,1/2}^4
	+64n \bar \phi_{n,1/2}^2
	\tau_{\max}\right)
\end{eqnarray*}
This leads to
\begin{align*}
	&\sup_{\tau^2\in \Theta_s^b\left(\beta, \bar Q\right)} \max_{k\leq n^{1/2}}  \left\|\left(J_n^\tau\right)^{1/2}DS_{k}D\left(J_n^\tau\right)^{1/2}\right\|_F^2  = O\left(n^{-1}\log^4 n \bar \phi_{n,1/2}^2\right)
\end{align*}
and with (\ref{eq.normineqs}) and (\ref{eq.maintermo}) completes the proof.
\end{proof}

\begin{lem2}
\label{lem.approx}
Let $s_{k,p}$ and $t_{k,p}$ as defined in (\ref{def.skl}). Then it holds
\begin{itemize}
\item[(i)]
\begin{align*}
	s_{k,p}= \frac 12 s_{0,p}+\frac 14 s_{0,p-k} +\frac 14 s_{0,p+k}, \quad
	t_{k,p}= \frac 12 t_{0,p}+\frac 14 t_{0,p-k} +\frac 14 t_{0,p+k}.
\end{align*}
\item[(ii)]
\begin{align*}
	\frac 1n \sum_{r=1}^{n-1} \sigma^2_k\left(\frac rn\right)
	\cos\left(\frac{pr\pi}n\right)
	&= A\left(\sigma_k^2, p\right) -\frac 1{2n}
	\left(\left(-1\right)^p \sigma^2_k(1)+\sigma^2_k(0)\right).
\end{align*}
\item[(iii)] Let $\Sigma_k$ as defined in (\ref{eq.sigmak}). Then 
\begin{align*}
	\left(D \Sigma_k^2 D\right)_{i,j}
	= A\left(\sigma_k^2, i-j\right)-A\left(\sigma_k^2, i+j\right).
\end{align*}
\end{itemize}
\end{lem2}

\begin{rem2}
In $(iii)$, for $\left|i-j\right| \ll i+j$, the r.h.s. behaves like $s_{k,i-j}$. In the same way we obtain the equivalent result if we replace $\sigma^2$ by $\tau^2$.
\end{rem2}

\begin{proof}
(ii) Note that we can write
\begin{eqnarray*}
	\sigma_k^2\left(\frac rn\right)
	=s_{k,0}+2\sum_{q=1}^\infty s_{k,q}\cos\left(q\pi r/n\right)
\end{eqnarray*}
and hence it holds
\begin{align*}
	&\frac 1n \sum_{r=1}^{n-1} \sigma_k^2\left(\frac rn\right) 
	\cos\left(\frac {p r \pi}n\right) \\
	&\quad =
	\frac 1n s_{k,0}\sum_{r=1}^{n-1} \cos\left(\frac {p r \pi}n\right)
	+\frac 2n \sum_{q=1}^\infty s_{k,q} 
	\sum_{r=1}^{n-1} \cos\left(\frac{q\pi r}n\right)
	\cos\left(\frac {p r \pi}n\right).
\end{align*}
Let $\mathbb{I}_{\left\{A\right\}}$ denote the indicator function on the set $A$. We have the identities
\begin{eqnarray*}
	\sum_{r=1}^{n-1}\cos\left(\frac{pr\pi}n\right)
	= n\mathbb{I}_{\left\{p \equiv 0 \bmod 2n\right\}}-\frac 12\left(1+\left(-1\right)^p\right)
\end{eqnarray*}
and
\begin{eqnarray*}
	2 \sum_{r=1}^{n-1} \cos\left(\frac{q\pi r}n\right)
	\cos\left(\frac {p r \pi}n\right)
	= \sum_{r=1}^{n-1} \cos\left(\frac{\left(q-p\right)\pi r}n\right)
	+ \sum_{r=1}^{n-1} \cos\left(\frac{\left(q+p\right)\pi r}n\right).
\end{eqnarray*}
From this it follows
\begin{align*}
	&\frac 1n \sum_{r=1}^{n-1} \sigma^2\left(\frac rn\right) 
	\cos\left(\frac {p r \pi}n\right) \\
	&\quad\quad =
	\frac 1n \left[-\frac 12\left(1+\left(-1\right)^p\right) s_{k,0}
	-\sum_{q=1}^\infty s_{k,q}\left(1+\left(-1\right)^{q-p}\right)
	\right]
	+A\left(\sigma_k^2,p\right),
\end{align*}
which yields the result.

(iii) This follows by applying $(ii)$ to
\begin{align*}
	\left(D
	\Sigma_k^2 D\right)_{i,j}
	&=
	\frac 2n\sum_{r=1}^{n-1} \sigma^2_k\left(\frac rn\right)
	\sin\left(\frac{ir\pi}n\right)
	\sin\left(\frac{rj\pi}n\right) \notag \\
	&= 
	\frac 1n \sum_{r=1}^{n-1} \sigma^2_k\left(\frac rn\right)
	\cos\left(\frac{\left(i-j\right)r\pi}n\right)
	-
	\frac 1n \sum_{r=1}^{n-1} \sigma^2_k\left(\frac rn\right)
	\cos\left(\frac{\left(i+j\right)r\pi}n\right).
\end{align*}
\end{proof}

The next Lemma gives a bound of the absolute values of Fourier coefficients of $\sigma_k^2$ in Sobolev s-ellipsoids. In particular the result shows that the Fourier series is absolute summable.

\begin{lem2}
\label{lem.bound}
Let $s_{k,p}$ be as defined in (\ref{def.skl}). Assume $k \leq cn^{\gamma}$, where $0<c<1$ is a constant and either $\gamma>0, \alpha>1/2$ or $k=0, \gamma=0$ and $\alpha>1/2$. Then it holds for $n$ large enough
\begin{eqnarray*}
	\sup_{{\sigma^2\in \Theta_s\left(\alpha,Q\right)}}
	\sum_{m=\left[n^{\gamma}\right]}^\infty \left|s_{k,m}\right|
	\leq C_{\gamma, \alpha, Q, c}n^{\gamma\left(1/2-\alpha\right)},
\end{eqnarray*}
where $C_{\gamma, \alpha, Q, c}$ is independent of $n$.
\end{lem2}

\begin{proof}
Consider the case $\gamma>0, \alpha>1/2$. Using Lemma \ref{lem.approx} (i), we see that for $n$ large enough
\begin{eqnarray*}
	\sum_{m=\left[n^{\gamma}\right]}^\infty \left|s_{k,m}\right|
	&\leq& \sum_{m=\left[\left(1-c\right)n^{\gamma}\right]}^\infty
	\left|s_{0,m}\right|
	= \sum_{m=1}^\infty \left|s_{0,m}\right|
	I_{\left\{m \geq \left[\left(1-c\right)n^{\gamma}\right]\right\}} \\
	&\leq& 2
	\left(\sum_{i=1}^\infty i^{2\alpha} \frac{s_{0,i}^2}4\right)^{1/2}
	\left(\sum_{i=\left[\left(1-c\right)n^{\gamma}\right]}^\infty
	i^{-2\alpha}\right)^{1/2} 
	\leq C_{\gamma, \alpha, Q, c}n^{\gamma\left(1/2-\alpha\right)},
\end{eqnarray*}
where we used the definition of a Sobolev s-ellipsoid in the last step. If $k=0$, $\gamma=0$ and $\alpha>1/2$ we can argue similarly.

\end{proof}

In the next lemma we collect some important facts about positive semidefinite matrices and trace calculation.
\begin{lem2}
\label{lem.p}
\begin{itemize}
\item[(i)]
Let $A \in \mathbb{M}_n$ be symmetric. $A$ is positive semidefinite iff $A=B^tB$ for some $B \in \mathbb{M}_n$.
\item[(ii)]
If $A, B$ are positive semidefinite matrices. Denote by $\lambda_1(A)$ the largest eigenvalue of $A$. Then $\Tr(AB)\leq \lambda_1(A)\Tr(B)$.
\item[(iii)] Let $A, B \in \mathbb{M}_{n-1}$ be positive semidefinite. Then 
\begin{eqnarray*}
	\lambda_{r+s+1}\left(AB\right)
	&\leq& \lambda_{r+1}\left(A\right)\lambda_{s+1}\left(B\right)
	\quad 0\leq r+s\leq n-2 \\
	\lambda_{n-r-s+1}\left(AB\right)
	&\geq& \lambda_{n-r}\left(A\right)\lambda_{n-s}\left(B\right)
	\quad 2\leq r+s\leq n.
\end{eqnarray*}
\item[(iv)] Let $A$ and $B$ symmetric matrices. Then
\begin{eqnarray*}
	\lambda_{r+s+1}\left(A+B\right)
	\leq \lambda_{r+1}\left(A\right)+\lambda_{s+1}\left(B\right)
	\quad 0\leq r+s\leq n-2 .
\end{eqnarray*}
\item[(v)] (CS inequality for trace operator) Let $A$ and $B$ matrices of the same size. Then
\begin{eqnarray*}
	\left|\Tr\left(AB^t\right)\right|
	\leq \Tr^{1/2}\left(AA^t\right)\Tr^{1/2}\left(BB^t\right).
\end{eqnarray*}
\item[(vi)] Let $A, B$ matrices of the same size. Then $$A^tB+B^tA\leq A^tA+B^tB.$$ 
\end{itemize}
\end{lem2}

\begin{cor2}
\label{cor.eig}
Let $A$ and $B$ matrices of the same size. Then
\begin{eqnarray*}
	\lambda_1\left(AB^t+BA^t\right)
	\leq \lambda_1\left(AA^t\right)+\lambda_1\left(BB^t\right).
\end{eqnarray*}
\end{cor2}
\begin{proof}
By Lemma \ref{lem.p} $(vi)$  $A^tB+AB^t\leq A^tA+B^tB$. Applying Lemma \ref{lem.p} $(iv)$ for $r=s=0$ yields the result.
\end{proof}

In the following Lemma, we summarize some facts on Frobenius norms.
\begin{lem2}
\label{lem.ABineq}
\label{lem.froblem}
\begin{enumerate} Let $A \in \mathbb{M}_{n-1}$. Then
\item[(i)]
\begin{eqnarray*}
	\left\|A\right\|_F^2:=\Tr\left(AA^t\right)
	=\sum_{i=1}^{n-1} \lambda_i\left(AA^t\right)
	=\sum_{i,j=1}^{n-1} a^2_{i,j}
\end{eqnarray*}
and whenever $A=A^t$ also $\left\|A\right\|_F^2=\sum_{i=1}^{n-1} \lambda_i^2\left(A\right)$.
\item[(ii)]
It holds
\begin{eqnarray*}
	4\Tr\left(A^2\right)\leq \left\|A+A^t\right\|_F^2\leq 4\left\|A\right\|_F^2.
\end{eqnarray*}
\item[(iii)]
Let $A$, $B$ be positive semidefinite matrices of the same size and $0\leq A\leq B$. Further let $X$ be another matrix of the same size. Then
\begin{eqnarray*}
	\left\|X^tAX\right\|_F
	\leq
	\left\|X^tBX\right\|_F.
\end{eqnarray*}
\end{enumerate}
\end{lem2}

\begin{proof}
(i) and (ii) is well known and omitted. (iii) By assumptions it holds $0\leq X^t A X \leq X^t B X$. Hence $\lambda_i^2\left(X^t A X\right)\leq \lambda_i^2 \left( X^t B X\right)$ and the result follows.
\end{proof}

%
%

\begin{lem2}
\label{lem.n}
Let $V=\left(V_1,\ldots,V_n\right), W=\left(W_1,\ldots,W_m\right)$ be two independent, centered random vectors. Let $A=\left(a_{i,j}\right)_{i,j=1,\ldots,n}\in \mathbb{M}_{n}$ and $B \in \mathbb{M}_{n,m}$. Then
\begin{itemize}
\item[(i)] $\E\left(V^tAV\right)=\Tr\left(A\Cov\left(V\right)\right)$, $\E\left(V^tBW\right)=0$ and
\item[(ii)] Assume further that $V_i \perp V_j$ for all $i,j=1,\ldots,n$, $i\neq j$ and $W_k \perp W_l$ for all $k,l=1,\ldots,m$, $k\neq l$ and $\Var\left(V_i\right)=\Var\left(W_k\right)=1$ for $i=1,\ldots,n$ and $j=1,\ldots,m$. We set $\TrSq(A):=\sum_{i=1}^{n}a_{i,i}^2.$ Then  
\begin{align}
	\Var\left(V^tAV\right)
	& =  \Cum_4\left(V\right)\sum_{i=1}^n a_{ii}^2+\Tr\left(A^2+AA^t\right)
	\notag \\
	&\leq \Cum_4\left(V\right)\sum_{i=1}^n a_{ii}^2 + 2 \left\|A\right\|_F^2 \notag \\
	&\leq \left(2+\Cum_4\left(V\right)\right)\left\|A\right\|_F^2, 
	\label{eq.vtav} \\ 
	\Var\left(V^tBW\right)&=\left\|B\right\|_F^2, \notag \\
	\Var\left(V^tABW\right)
	&\leq \left\|AA^t\right\|_F \left\|BB^t\right\|_F.
	\label{eq.vtabw}
\end{align}
\end{itemize}
\end{lem2}

\begin{proof}
We only proof the first and the last statement in $(ii)$. Note that $$\Var\left(V^tAV\right)= \sum_{i,j,k,l=1}^n a_{ij}a_{kl}\Cov\left(V_iV_j,V_kV_l\right).$$ If $i=j=k=l$ then $\Cov\left(V_iV_j,V_kV_l\right)= 2+\Cum_4\left(V\right)$; if $i=k$, $j=l$, $i\neq j$ or $i=l$, $j=k$, $i\neq j$ then $\Cov\left(V_iV_j,V_kV_l\right)=1$. Otherwise $\Cov\left(V_iV_j,V_kV_l\right)=0$ and this gives (\ref{eq.vtav}).

In order to see (\ref{eq.vtabw}) note that by Lemma \ref{lem.p} (v)
\begin{align*}
 \Var\left(V^tABW\right)
	&=\left\|B^tA\right\|_F^2
	= \Tr\left(\left(BB^t\right)\left(AA^t\right)\right) \\
	&\leq \Tr^{1/2}\left(\left(BB^t\right)^2\right)
	\Tr^{1/2}\left(\left(AA^t\right)^2\right)
	= \left\|BB^t\right\|_F\left\|AA^t\right\|_F.
\end{align*}
\end{proof}
%
%
%

\begin{thm2}
\label{thm.wclt}
Let $\xi \sim \mathcal{N}(0,I_n)$ and $A$ be a positive semidefinite matrix. Then 
\begin{eqnarray*}
	\Var^{-1/2}\left(\xi^t A \xi\right)
	\left(\xi^t A \xi- \E\xi^t A \xi \right)
	\rightarrow
	\mathcal{N}(0,1)
\end{eqnarray*}
if and only if $\Var^{-1/2}\left(\xi^t A \xi\right)\lambda_1\left(A\right) \rightarrow 0$.
\end{thm2}

\begin{lem2}
\label{lem.eigbound}
Let $n\geq 4$. Then
\begin{eqnarray*}
	\lambda_1\left(J_n^{\tau}\right)\leq 4n^{-1}\log^2n.
\end{eqnarray*}
\end{lem2}

\begin{proof}
Let $r=\left[n/\log n\right]$ and note that $\sin(x)^{-1}\leq 2/x$ for $x\in (0,\pi/2]$. Then
\begin{align*}
	\lambda_r^{-1}
	\leq \left(\frac{2n}{r\pi}\right)^2
	\leq \frac 4{\pi^2}n^2\left(\frac 12 \frac n{\log n}\right)^{-2}
	\leq 2\log^2 n
\end{align*}
and
\begin{align*}
	\lambda_1\left(J_n^{\tau}\right)
	&=
	\left(n-n/\log n \right)^{-1}\lambda_r^{-1}
	\leq \frac 4n \log^2 n	
\end{align*}
\end{proof}

\begin{lem2}
\label{lem.sumeig}
Let $\lambda_i$ be as defined in (\ref{def.lambdai}). Then it holds
\begin{eqnarray*}
	\sqrt n \sum_{i=\left[n^{1/2}\right]+1}^{2\left[n^{1/2}\right]}
	\lambda_i
	=
	\frac{7\pi^2}3+O\left(n^{-1/2}\right).
\end{eqnarray*}
\end{lem2}

\begin{proof}
Let $x_i=i\pi/\left(2n\right)$. Note that $\sin^2\left(x_i\right)= x_i^2 -\xi_i^4/3$, where $\xi_i \in \left(0,x_i\right)$. Further $\max_{i=\left[n^{1/2}\right]+1,\ldots,2\left[n^{1/2}\right]} x_i \leq n^{-1/2}\pi$. Hence
\begin{eqnarray*}
	n^{1/2}\sum_{i=\left[n^{1/2}\right]+1}^{2\left[n^{1/2}\right]}\xi_i^4 
	\leq n \max_{i=\left[n^{1/2}\right]+1,\ldots,2\left[n^{1/2}\right]} x_i^4=O\left(n^{-1}\right)
\end{eqnarray*}
and thus
\begin{eqnarray*}
	n^{1/2}\sum_{i=\left[n^{1/2}\right]+1}^{2\left[n^{1/2}\right]} \lambda_i
	&=& 4n^{1/2} \sum_{i=\left[n^{1/2}\right]+1}^{2\left[n^{1/2}\right]}
	\frac{i^2\pi^2}{4n^2} +\frac 13\xi_i^4 \\
	&=&\pi^2n^{-3/2} \sum_{i=\left[n^{1/2}\right]+1}^{2\left[n^{1/2}\right]} i^2 +O\left(n^{-1}\right)
	= \frac{7\pi^2}3+O\left(n^{-1/2}\right).
\end{eqnarray*}

\end{proof}

%

\begin{lem2}[Continuous Sobolev Embedding]
\label{lem.sobolem}
Let $\mathcal{C}\left(q\right)$, $q>0$ denote the space of H\"older continuous functions on $[0,1]$ equipped with the canonical norm $\left\|.\right\|_{\mathcal{C}\left(q\right)}$ and define
\begin{align*}
	\eta: (1/2,\infty)\times [0,\infty) 
	\rightarrow \mathbb{R}, \quad
	\eta\left(\alpha,\delta\right):=
	\begin{cases}
	\alpha-1/2 \quad &\alpha \in \left(1/2,3/2\right), \\
	1-\delta &\alpha=3/2, \\
	1 &\alpha>3/2.
	\end{cases}
\end{align*}
Suppose $\alpha>1/2$. Then for any $\delta>0$ the embedding 
\begin{align*}
	\iota : \ \Theta_s^b\left(\alpha,Q\right)
	\hookrightarrow \mathcal{C}\left(\eta\left(\alpha,\delta\right)\right)
\end{align*}
is continuous and in particular
\begin{align*}
	\sup_{f\in \Theta_s^b\left(\alpha,Q\right)}
	\left\|f\right\|_{\mathcal{C}\left(\eta\left(\alpha,\delta\right)\right)}
	<\infty.
\end{align*}
\end{lem2}

\begin{proof}
For a given function $f:\left[0,1\right]\rightarrow \mathbb{R}$ define $\tilde f:\left[-1,1\right]\rightarrow \mathbb{R}$,
\begin{align*}
	\tilde f(x):=
	\begin{cases}
	f(x) \quad &x\in\left[0,1\right],\\
	f(-x) &x\in\left[-1,0\right].
	\end{cases}
\end{align*}
Let for $s>0,$ $\left.W^{s,2}\left[-1,1\right]\right|_{\left[0,1\right]}$ denote the (fractional) Sobolev space on $\left[-1,1\right],$ where the domain of functions is restricted to $\left[0,1\right]$ equipped with the norm
\begin{align*}
	\left\|f\right\|_{W^{s,2}\left[-1,1\right]\mid_{\left[0,1\right]}}
	:= \left\| \tilde f \right\|_{W^{s,2}\left[-1,1\right]}.
\end{align*}
Note this is a function space on $\left[0,1\right]$ and $\left.W^{s,2}\left[-1,1\right]\right|_{\left[0,1\right]} \neq W^{s,2}\left[0,1\right]$. By the Sobolev embedding theorem (see Taylor (1996), Proposition 8.5) we have for $\alpha>1/2$ that
\begin{align*}
	\iota : \ \Theta_s^b\left(\alpha,Q\right) \subseteq \left. W^{\alpha,2}\left[-1,1\right]\right |_{\left[0,1\right]}
	\hookrightarrow
	\mathcal{C}\left(\eta\left(\alpha,\delta\right)\right)
\end{align*}
is continuous and since it is linear also bounded. This yields
\begin{align*}
	\sup_{f\in \Theta_s^b\left(\alpha,Q\right)}
	\left\|f\right\|_{\mathcal{C}\left(\eta\left(\alpha,\delta\right)\right)}
	\leq \left\| \iota \right\| \sup_{f\in \Theta_s^b\left(\alpha,Q\right)}
	\left\|f\right\|_{W^{\alpha,2}\left[-1,1\right]\mid_{\left[0,1\right]}}
	<\infty.
\end{align*}
\end{proof}

\end{appendices}

\end{document}